\newif\iffull
\definecolor{backcolour}{rgb}{0.95,0.95,0.94}
\definecolor{lightblue}{rgb}{.97, .97, 1}
\lstdefinelanguage{Isabelle}{
    keywords=[1]{definition, record, lemma, where, datatype, type_synonym, function, for, in, if, then, else, return, while, and},
    keywords=[2]{'v},
    keywords=[3]{acq_wr_lock, cl_commit, cl_write_commit, cl_read_invoke},
    alsoletter=':,
    sensitive=true, 
    morecomment=[l]{---} 
}
\lstdefinestyle{Isabellestyle}{
    backgroundcolor=\color{backcolour}, 
    commentstyle=\color{olive}\rmfamily\itshape,
    keywordstyle=[1]\color{blue}\bfseries,
    keywordstyle=[2]\color{violet}\bfseries,
    keywordstyle=[3]\bfseries,
    numberstyle=\tiny\color{gray},
    basicstyle=\linespread{0.85}\ttfamily\small,
    basewidth=0.5em,
    breakatwhitespace=false,         
    breaklines=true,                 
    captionpos=b,
    columns=fixed,
    fontadjust=true,
    frame=single,
    keepspaces=true,                 
    mathescape,
    numbers=left,
    numbersep=5pt,
    rulecolor=\color{backcolour},                  
    showspaces=false,                
    showstringspaces=false,
    showtabs=false,                  
    tabsize=2,
}
\newcommand{\inlsec}[1]{\smallskip\noindent\textbf{#1}}
\newcommand{\invcmts}{inverted commits\xspace}
\newcommand{\ep}{Eiger-PORT\xspace}
\newcommand{\ouralg}{Eiger-PORT+\xspace}
\newcommand{\nats}{\ensuremath{\mathtt{nat}}}
\newcommand{\fun}{\rightarrow}
\newcommand{\map}{\rightharpoonup}
\newcommand{\Skip}{\mathsf{skip}}
\newcommand{\reach}{\mathsf{reach}}
\newcommand{\refines}{\ensuremath{\mathrel{\preccurlyeq}}}
\newcommand{\refmap}[2]{#1_{#2}}
\newcommand{\refmapkvs}[1]{\inlisa{kvs\_of}}
\newcommand{\refmapviews}[1]{\inlisa{views\_of}}
\DeclareTextFontCommand{\isakw}{\rmfamily\bfseries}  
\newcommand{\isaco}[1]{\ensuremath{\mathsf{#1}}}                  
\DeclareTextFontCommand{\isaid}{\rmfamily\itshape}   
\newcommand{\isa}{\isaid}
\newcommand{\keytype}{\inlisa{key}}
\newcommand{\valuetype}{\inlisa{value}}
\newcommand{\verstype}{\inlisa{version}}
\newcommand{\listtype}[1]{\inlisa{#1$\;$list}}
\newcommand{\settype}[1]{\inlisa{#1$\;$set}}
\newcommand{\rwtype}{\{\inlisa{R,W}\}}
\newcommand{\TxID}{\isa{TxID}}
\newcommand{\SO}{\isaco{SO}}
\newcommand{\WR}{\isaco{WR}}
\newcommand{\kvs}{\mathcal{K}}
\newcommand{\views}{\mathcal{U}}
\newcommand{\fprint}{\mathcal{F}}
\newcommand{\readonly}{\isaco{rdonly}}
\newcommand{\visTx}{\isaco{visTx}}
\newcommand{\LWW}{\isaco{LWW}}
\newcommand{\wellformed}{\isaco{wf}}
\newcommand{\canCommit}{\isaco{canCommit}}
\newcommand{\RYW}{\isaco{RYW}}
\newcommand{\isomodel}[1]{\mathcal{I}_{#1}}
\newcommand{\restricted}[1]{\ensuremath{\widehat{#1}}}
\newcommand{\epp}{\ensuremath{\mathtt{EPP}}}
\newcommand{\eppordered}{\restricted{\epp{}}}
\newcommand{\TCCv}{\isaco{TCCv}}
\DeclareTextFontCommand{\inlisa}{\ttfamily}   
\newcommand{\hlc}[2][yellow]{{%
    \colorlet{foo}{#1}%
    \sethlcolor{foo}\hl{#2}}%
}
\definecolor{shadecolor}{gray}{0.95}
\newenvironment{shadedquotation}
 {
  \begin{shaded*}
  \vspace{-2pt}
  \quoting[indentfirst=false, leftmargin=1pt, vskip=0pt]
 }
 {\endquoting
 \vspace{-2pt}
 \end{shaded*}
}
\begin{document}
\title{Pushing the Limit: Verified Performance-Optimal Causally-Consistent Database Transactions}
\titlerunning{Verified Performance-Optimal Causally-Consistent Database Transactions}
%
\author{Shabnam Ghasemirad
\and Christoph Sprenger \and
Si Liu\and \\  Luca Multazzu \and David Basin}
\authorrunning{S. Ghasemirad et al.}
%
\institute{ETH Zurich, Switzerland
}
\maketitle              
\begin{abstract}
Modern web services crucially rely on high-performance distributed databases, where concurrent transactions are isolated from each other using concurrency control protocols. Relaxed isolation levels, which permit more complex concurrent behaviors than strong levels like serializability, are used in practice for higher performance and availability. 

In this paper, we present \ouralg, a concurrency control protocol that achieves a strong form of causal consistency, called TCCv (Transactional Causal Consistency with convergence). We show that \ouralg also provides performance-optimal read transactions in the presence of transactional writes, thus refuting an open conjecture that this is impossible for TCCv. We also deductively verify that \ouralg satisfies this isolation level by refining an abstract model of transactions. This yields the first deductive verification of a complex concurrency control protocol. Furthermore, we conduct a performance evaluation showing \ouralg's superior performance over the state-of-the-art.


\end{abstract}
\section{Introduction}

Modern web services are built on top of high-performance database systems operating in partitioned, geo-distributed environments. These systems provide distributed transactions that group the users' read and write requests.
To balance data consistency and system performance, 
databases provide a spectrum of \emph{isolation levels} (I in ACID: Atomicity, Consistency, Isolation, and Durability~\cite{DBLP:books/mg/SKS20}), 
defining the degree of separation between concurrent transactions.
Isolation is enforced by \emph{concurrency control protocols} (also called \emph{transaction protocols}).

Many applications, such as social networks, opt for weak isolation levels to avoid the performance overhead of stronger levels like serializability~\cite{serializability}.  
These weaker guarantees allow distributed transactions to remain functional even during network partitions, while still providing useful properties. 
Notably, \emph{transactional causal consistency} (TCC) represents a successful integration of ideas from the distributed computing and database communities. It extends causal consistency~\cite{CausalMemory:DC1995,CC:PPoPP2016}---the strongest consistency level achievable in an always-available system~\cite{Limitations:PODC2015}---by incorporating transactional guarantees. 
The past decade has seen sustained efforts in designing databases supporting performant causally-consistent distributed transactions~\cite{Eiger:NSDI2013,Cure:ICDCS2016,FriendFoe:VLDB2018,Slowdown:NSDI2017,NOC:OSDI2020,OCC:TPDS2021},
along with their growing adoption in industry~\cite{Neo4j,ElectricSQL,CosmosDB}.
Nearly all of these systems provide a stronger variant of TCC,
known as TCCv~\cite{Cure:ICDCS2016,Eiger:NSDI2013}, that includes \emph{data convergence} requiring views across different clients to eventually converge to the same state.

\looseness=-1
In this paper, we present a case study on developing and verifying a performance-optimal, causally-consistent, database transaction protocol. 
Our protocol, \ouralg, provides TCCv, for which we give a formal proof. 
Our work faced two challenges.
First, it is not a priori clear that such an isolation guarantee is achievable for a performance-optimal protocol. In fact, Lu et al.~\cite{NOC:OSDI2020} conjectured that for distributed performance-optimal read-only transactions (PORTs) in the presence of transactional writes, TCC (without convergence) is the strongest achievable isolation level. They presented the \ep protocol, which provides this guarantee. In this paper, we constructively refute their conjecture by designing our novel protocol, \ouralg, which achieves the stronger TCCv guarantee.

\looseness=-1
Second, transaction protocols are notoriously hard to get right, as witnessed by numerous design-level isolation errors in production databases~\cite{elle,polysi,plume,jepsen-analyses,txcheck}, and even in protocols that have undergone pen-and-pencil proofs~\cite{DBLP:conf/wadt/Olveczky16} and model-checking analysis~\cite{osdi23}. 
We thus aim for a full deductive verification of \ouralg,  covering \emph{all} possible behaviors. To our knowledge, the deductive verification of transaction protocols for weak isolation levels, which exhibit complex concurrent behaviors, has not been attempted so far. Previous efforts in this area have focused on simple textbook protocols like two-phase locking achieving serializability or employed model checking, which requires bounding the number of processes and transactions. We address this challenge using our Isabelle/HOL framework~\cite{GhasemiradLiuSprenger+-VLDB25} built around Xiong et al.'s abstract transaction model~\cite{DBLP:conf/ecoop/XiongCRG19}. We formalize \ouralg and show that it satisfies TCCv using reduction~\cite{DBLP:journals/cacm/Lipton75} in combination with a refinement of the TCCv instance of the abstract transaction model.

Furthermore, we implement and deploy \ouralg, along with \ep and its precursor Eiger,  and conduct a comprehensive performance comparison of these three protocols.
Our evaluation demonstrates  \ouralg's superior performance in terms of system throughput and latency across various scenarios, e.g., with a growing number of clients and servers.  

The complete formal development accompanying this paper, including all definitions and proofs, as well as a protocol implementation are available at~\cite{ghasemirad_2025_14622074}.

\medskip
\inlsec{Contributions} Overall, we see our contributions as three-fold:
\begin{itemize} 
\item \textbf{Conjecture refutal.}
We formally refute the conjecture that TCC is the strongest achievable isolation level for PORTs in the presence of transactional writes by designing a protocol, \ouralg, that provably achieves TCCv.

\item \textbf{Proof of correctness.}
We model \ouralg in Isabelle/HOL and verify its correctness by showing that its behavior conforms to the TCCv instance of the abstract transaction model~\cite{DBLP:conf/ecoop/XiongCRG19}. 
This represents the first complete formal verification of a complex distributed database transaction protocol.

\item \textbf{Superior performance.}
We deploy \ouralg, along with two state-of-the-art causally-consistent transaction protocols, in a cluster and evaluate their performance.
Our experimental results demonstrate \ouralg's superior performance, with both lower latency and higher throughput. 

\end{itemize}

\section{Background}

\subsection{Distributed Database Transactions}
\label{subsec:distributed-database-transactions}

In a distributed database, vast amounts of data are
split up and 
stored across multiple servers, also called partitions. 
User requests are submitted as database transactions, initiated by front-end clients.
Each client executes the transactions in its own session, where 
a transaction comprises a sequence of read and/or write operations on data items (or keys) distributed across partitions.

\smallskip
\inlsec{Isolation levels}
Distributed databases offer various isolation levels, 
differing on how they
balance data consistency and system performance. 
Figure~\ref{fig:isolation-hierarchy} 
shows a spectrum of practically relevant 
isolation levels, ranging from weaker ones like Read Committed, through various forms of transactional causality, to stronger guarantees such as Serializability. We briefly explain Read Atomicity and two variants of transactional causality, which are the focus of this work.

\begin{figure}[t]
\begin{center}
   \includegraphics[width=.9\columnwidth]{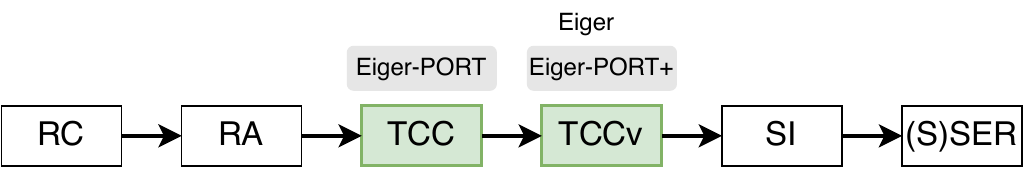}
\end{center}
\captionsetup{skip=0pt}
   \caption{A spectrum of isolation levels.  $A \to B$ means  $A$ is  weaker than $B$.
    RC: read committed~\cite{si};
	RA: read atomicity~\cite{ramp};
	TCC: transactional causal consistency~\cite{NOC:OSDI2020}, provided by \ep~\cite{NOC:OSDI2020};
    TCCv: TCC with convergence~\cite{Cure:ICDCS2016,Eiger:NSDI2013},  offered by Eiger and our \ouralg; 
	SI: snapshot isolation~\cite{si}; 
	(S)SER: (strict) serializability~\cite{serializability}. 
Protocols supporting PORTs are highlighted in 
\hlc[gray!20]{gray}.
   }
   \label{fig:isolation-hierarchy}
\end{figure}

\begin{description}[leftmargin=15pt]
\item[Read Atomicity (RA).] This is also known as \emph{atomic visibility}, requiring that all or none of a transaction’s updates are observed by other transactions. 
It prohibits \emph{fractured reads} anomalies, such as Carol only observing one direction of a new (bi-directional) friendship between Alice and Bob in a social network.

\item[Transactional Causal Consistency (TCC).] In addition to RA, this level requires that two causally related transactions appear to all client sessions in the same causal order~\cite{CausalMemory:DC1995,CC:PPoPP2016}. 
It prevents \emph{causality violations}, such as Carol observing Bob's response to Alice's message without seeing the message itself.

\item[TCC with Convergence (TCCv).] With TCC, different clients may observe causally unrelated transactions in different orders. 
TCCv's \emph{convergence} property prevents this by requiring these clients' views to converge to the same state~\cite{COPS:SOSP2011,Cure:ICDCS2016}. 
For example, this prevents confusion created by Alice and Bob independently posting ``Let's meet at my place'' in a road trip planner.
In practice, most causally-consistent databases provide convergence.
\end{description}

\inlsec{Performance-optimal read-only transactions} 
\looseness=-1 NOCS~\cite{NOC:OSDI2020} is the state-of-the-art impossibility result that captures conflicts between 
distributed transactions' performance and their isolation guarantees. 
NOCS proves that read-only transactions cannot complete with \textbf{N}on-blocking communication in \textbf{O}ne round and \textbf{C}onstant-size metadata, while achieving \textbf{S}trict serializability (SSER). 
At best, three of the four NOCS properties can be satisfied. In particular, protocols satisfying the NOC properties (under isolation levels weaker than SSER) are said to provide \emph{performance-optimal} read-only transactions (PORTs). The NOCS authors~\cite{NOC:OSDI2020} introduce the \ep protocol providing TCC and PORTs. 
They also state the following conjecture, which has remained unresolved for four years. 

\begin{shadedquotation}\label{conjecture}
\textbf{\emph{Conjecture}.} TCC is the strongest isolation level achievable for PORTs in the presence of transactional writes.
\end{shadedquotation}

Recent studies show that write-heavy workloads involving transactional writes are more prevalent than previously assumed and are expected to become increasingly prominent~\cite{DBLP:journals/tos/YangYR21}. 
This, along with the practical significance of TCCv, motivates our work on
refuting the above conjecture and 
pushing the boundary.

\subsection{Transition Systems and Refinement}
\label{subsec:lts-refinement}

\looseness=-1
We use \emph{labeled transition systems} (LTSs) to model database protocols and the abstract transaction model. 
An LTS $\mathcal{E} = (S, I, \{\xrightarrow{e} \, \mid e \in E\})$ consists of a set of states $S$, a non-empty set of initial states $I \subseteq S$, and transition relations $\xrightarrow{e} \;\subseteq S \times S$, one for each event $e \in E$. We assume an idling event $\Skip \in E$ with $s \xrightarrow{\Skip} s$.
We often define the relations $\xrightarrow{e}$ using \emph{guard} predicates $G_e$ and \emph{update} functions $U_e$ by $s \xrightarrow{e} s'$ if and only if $G_e(s) \land s' = U_e(s)$. 
A state~$s$ is \emph{reachable} if a sequence of transitions leads from an initial state to~$s$. We denote the set of reachable states of $\mathcal{E}$ by $\reach(\mathcal{E})$. A set of states $J$ is an \emph{invariant} if $\reach(\mathcal{E}) \subseteq J$.  

\emph{Refinement} relates two LTSs 
$\mathcal{E}_i = (S_i, I_i, \{\xrightarrow{e}_i \, \mid e\in E_i\})$, for $i \in\{1,2\}$. Given \emph{refinement mappings} $r\!: S_2 \fun S_1$ and $\pi\! : E_2 \fun E_1$ between the LTSs' states and events, we say $\mathcal{E}_2$ refines $\mathcal{E}_1$, written $\mathcal{E}_2 \refines_{r,\pi} \mathcal{E}_1$, if
(i) $r(s) \in I_1$ for all $s \in I_2$ and 
(ii) $r(s) \xrightarrow{\pi(e)}_1 r(s')$ whenever $s \xrightarrow{e}_2 s'$. 
Using guards and updates, (ii) reduces to two proof obligations: assuming $G^2_e(s)$ prove (a) $G^1_{\pi(e)}(r(s))$ (\emph{guard strengthening}) and (b) $r(U^2_e(s)) = U^1_{\pi(e)}(r(s))$ (\emph{update correspondence}).
Refinement guarantees the inclusion of sets of reachable states (modulo~$r$), i.e., $r(\reach(\mathcal{E}_2)) \subseteq \reach(\mathcal{E}_1)$, where $r$ is applied to each element of $\reach(\mathcal{E}_2)$.
Refinement proofs often require invariants 
to strengthen the refinement mapping.

\subsection{An Abstract Transaction Model}
\label{subsec:abstract-transaction-model}

Xiong et al.~\cite{DBLP:conf/ecoop/XiongCRG19} introduced a centralized operational model for atomic transactions operating on distributed multi-versioned key-value stores (KVSs), where the database stores data as key-value pairs and each key may be mapped to multiple versions for increased data availability. This model can be instantiated to different isolation guarantees, including RA, TCCv, and SSER (cf.~\Cref{fig:isolation-hierarchy}). They prove the equivalence of these model instances to their declarative counterparts based on abstract executions. The model can be used to prove the correctness of both concurrency control protocols and client programs. We have formalized this framework in Isabelle/HOL and extended it with an extensive library of lemmas supporting protocol correctness proofs~\cite{GhasemiradLiuSprenger+-VLDB25}.

\looseness=-1
Xiong et al.'s model is formulated as an LTS, called the \emph{abstract transaction model}, which abstracts the protocols' distributed collection of KVSs (each representing a shard and/or replica) into a single (centralized) multi-versioned KVS $\kvs\!: \keytype \fun \listtype{\verstype}$ that maps each key to a list of \emph{versions}. Each version~$\kvs(k,i)$ of a key~$k$ at the list index $i$ records (i) the value $v$ stored, (ii) the writer transaction $t$ that has produced this version, and (iii) the reader set $T$, i.e., the set of transactions that have read this version. The pairs $(t, t')$ for any $t' \in T$ are called \emph{write-read dependencies}. The relation $\WR_{\kvs}$ contains all such pairs.
The fact that, in a real, distributed system, each client $cl$ has a different partial \emph{client view} of $\kvs$ is modeled by explicitly representing these views in the model's configurations as mappings $\views(cl) \!: \keytype \fun \settype{\nats}$. This describes, for each key, the set of versions (denoted by list indices) visible to the client. Clients are assumed to process transactions sequentially. The \emph{session order} relation~$\SO$ captures the order of their transactions. 

The model assumes the \emph{snapshot property}, ensuring that each transaction reads and writes at most one version of each key. Hence, transactions can be represented by a \emph{fingerprint} $\fprint \! : \keytype \times \rwtype \map \valuetype$, which maps each key and operation (read or write) to at most one value. It also assumes that views are \emph{atomic}, i.e., clients observe either all or none of a transaction's effects. 
These properties together establish \emph{atomic visibility}, also called Read Atomicity (RA) (cf.~\Cref{subsec:distributed-database-transactions}), as the model's baseline isolation guarantee.

The model has two events (plus $\Skip$): \emph{commit}, which atomically executes an entire transaction,  
and \emph{view extension}, which monotonically extends a client's view of the KVS. 
The commit event's executability depends on the isolation guarantee.
Here, we focus on the model's TCCv instantiation, called $\isomodel{\TCCv}$.

\smallskip
\inlsec{Commit} 
The commit event's transition relation for TCCv is defined by:
\[
\renewcommand{\arraystretch}{1.2}
\infer{
  (\kvs, \views) \xrightarrow{\isaco{commit}(cl, \isa{sn}, u, \fprint)}_{\TCCv} 
  (\kvs', \views[cl \mapsto u'])
}{
  \begin{array}{c}
  \views(cl) \sqsubseteq u \quad 
  \canCommit_{\TCCv}(\kvs, u, \fprint) \quad
  u \sqsubseteq u' \quad
  \RYW(\kvs, \kvs', u') 
\\
  \LWW(\kvs, u, \fprint) \quad       
  \wellformed(\kvs, u) \quad 
  \wellformed(\kvs', u') 
\\
  t^{cl}_{sn} \in \isaco{nextTxids}(\kvs, cl) \quad
  \kvs' = \isaco{UpdateKV}(\kvs, t^{cl}_{sn}, u, \fprint)
  \end{array}
}
\]

The transition in the conclusion updates the configuration $(\kvs, \views)$ to the new configuration $(\kvs', \views[cl \mapsto u'])$, where $\kvs'$ is the updated KVS and $\views[cl \mapsto u']$ updates the client $cl$'s view  to $u'$. Both $\kvs'$ and $u'$ are determined by the rule's premises, which act as the event's guards with the following meanings:
\begin{itemize}[leftmargin=15pt]
\item $\views(cl) \sqsubseteq u$: This condition allows one to extend the client $\isa{cl}$'s current view to a (point-wise) larger one before committing.

\item $\canCommit_{\TCCv}(\kvs, u, \fprint)$: This is the central commit condition, which ensures that it is safe to commit a transaction at the TCCv isolation level. It requires that the set of visible transactions $\visTx(\kvs,u)$ (i.e., the writers of the versions that the view $u$ points to) is \emph{closed} under the relation $\SO \cup \WR_{\kvs}$, i.e., 
\begin{equation}
\label{eq:closedness} 
  ((\SO \cup \WR_{\kvs})^{-1})^{+}(\visTx(\kvs,u)) \subseteq \visTx(\kvs,u) \cup \readonly(\kvs).\footnote{This condition differs from the one presented in~\cite{DBLP:conf/ecoop/XiongCRG19}, but is equivalent.}  
\end{equation}
In other words, following the \emph{causal dependency} relation $\SO \cup \WR_{\kvs}$ backwards from visible transactions, we only see visible or read-only transactions.

\item $u \sqsubseteq u'$: This condition captures the \emph{monotonic reads} session guarantee, i.e., the view $u'$ extends the view $u$.

\item \looseness=-1 $\RYW(\kvs, \kvs', u')$: This condition expresses the \emph{read-your-writes} (RYW) session guarantee, stating that each client sees all versions previously written by itself.  

\item $\LWW(\kvs, u, \fprint)$: This captures the \emph{last-write-wins} conflict resolution policy, whereby a client reads each key's latest version in its view. 

\item $\wellformed(\kvs, u)$ and $\wellformed(\kvs, u')$: This requires that the views $u$ and $u'$ are \emph{wellformed}, i.e., they are atomic and contain indices that point to existing versions.

\item $t^{cl}_{sn} \in \isaco{nextTxids}(\kvs, cl)$: Transaction identifiers $t^{cl}_{sn} \in \TxID$ are indexed by the issuing client $cl$ and a (monotonically increasing) sequence number $sn$. This condition obtains a fresh transaction ID $t^{cl}_{sn}$, where the sequence number $sn$ is larger than any of the client $cl$'s sequence numbers used in~$\kvs$.

\item $\kvs' = \isaco{UpdateKV}(\kvs, t^{cl}_{sn}, u, \fprint)$: The KVS $\kvs'$ is obtained from $\kvs$ by adding the operations described by the fingerprint $\fprint$: the writes append a new version with the writer ID $t^{cl}_{sn}$ to the respective key's version list, and the reads add $t^{cl}_{sn}$ to the respective versions' reader sets.

\end{itemize}

\smallskip
\inlsec{View extension}
The view extension event for TCCv is defined by the rule:
\[
\renewcommand{\arraystretch}{1.2}
\infer{
  (\kvs, \views) \xrightarrow{\isaco{xview}(cl, u)}_{\TCCv} 
  (\kvs, \views[cl \mapsto u])
}{
  \begin{array}{c c}
  \views(cl) \sqsubseteq u & \quad \wellformed(\kvs, u)
  \end{array}
}
\]
and extends a client $cl$'s view from $\views(cl)$ to a wellformed view $u$. It abstractly models that additional versions of certain keys become visible to the client.

\section{Eiger-PORT+: An Overview}

\looseness=-1
Causally-consistent transactions have attracted the attention of both academia and industry in recent years.
Eiger~\cite{Eiger:NSDI2013} is among the first distributed databases providing TCCv. 
\ep~\cite{NOC:OSDI2020} improves Eiger's overall performance by optimizing its read-only transactions while sacrificing data convergence, thus allowing diverging 
client views of concurrent conflicting writes (to the same keys) as in TCC.
We show that this sacrifice is unnecessary and design \ouralg based on \ep. \ouralg provides TCCv with both read-only and write-only transactions.\footnote{\ouralg's pseudocode, together with its description, is given in 
\iffull \Cref{app:epp}.
\else 
\fi} 
The key idea of achieving convergent client views is to share with clients the \emph{total order} of versions on a server, which has already been established by uniquely assigned timestamps across versions.
In contrast, \ep constructs individual, possibly different, orders per client.
\ouralg's read-only transactions satisfy the NOC properties and are therefore performance-optimal (PORT). This is achieved in the same way as for \ep. In particular, both protocols' read operations use only a fixed number of timestamps as metadata.
We now give a high-level description of both \ep and \ouralg, starting with their commonalities and then highlighting their differences. 

\subsection{Timestamps}
Both distributed transaction protocols leverage \emph{timestamp-based} concurrency control. The timestamps are based on Lamport clocks~\cite{lamportclk}, which clients and servers maintain and update with each local or communication event. Whenever a transaction commits a new version, the current clock reading is paired with the transaction's client ID to generate a globally unique commit timestamp. The lexicographic order of these pairs induces a total order on commit timestamps.

Each server maintains a \emph{local safe time} $\inlisa{lst}$ that corresponds to the minimum of the uncommitted transactions' timestamps or, if there is none, the maximum committed timestamp for that server.
Each client maintains a variable $\inlisa{lst\_map}$, which maps server IDs to their latest known $\inlisa{lst}$ value, and a \emph{global safe time} $\inlisa{gst}$. The latter is updated to the minimum timestamp in $\inlisa{lst\_map}$ when a read-only transaction starts and acts as the \emph{stable frontier} for that client: all transactions with earlier timestamps are guaranteed to be committed on the server.
Each read sent to a server includes $\inlisa{gst}$ as a read timestamp, which is used to safely read a committed version with a timestamp lower than the $\inlisa{gst}$, or the client's latest own write (to achieve RYW), if its timestamp is higher than the $\inlisa{gst}$.

\subsection{Read and Write Transactions}

Write transactions are similar in both protocols and follow a variant of the two-phase commit (2PC) protocol that always commits~\cite{Eiger:NSDI2013}.
In the \emph{prepare} phase, each timestamped write is sent to the corresponding partition, which adds the write to its local data store as a pending version. In the \emph{commit} phase, each partition sets the version as committed, along with its commit timestamp.
However, the two protocols differ in how they handle read transactions in the absence of an own write newer than $\inlisa{gst}$. While \ouralg always reads the \emph{latest} version below $\inlisa{gst}$ in this case, \ep searches for the latest version below $\inlisa{gst}$ that either has no write conflicts or is written by a different client. This backward scan is presumably done to maintain read atomicity (RA). 
However, we show that this scan is unnecessary for RA, can harm performance, and cause client view divergence. Consequently, in addition to providing convergence, \ouralg improves performance by eliminating this scan's overhead. 

\begin{figure}[t]
  \centering
  \begin{subfigure}[t]{.063\linewidth}
    \includegraphics[width=\linewidth]{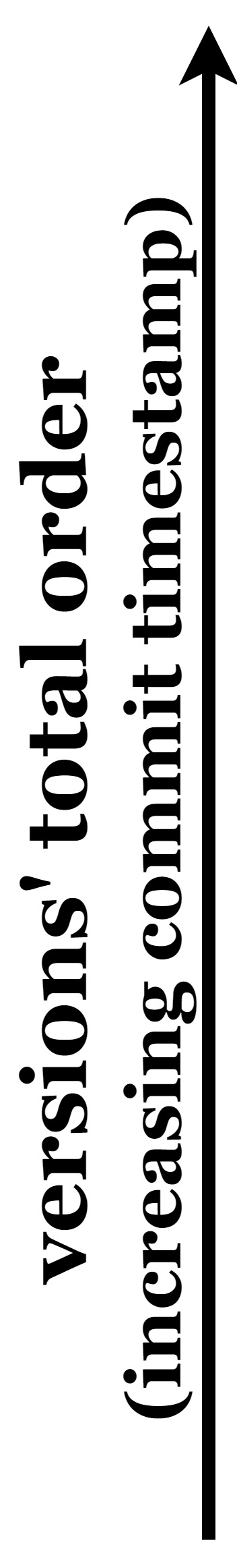}
  \end{subfigure}
  \begin{subfigure}[t]{.3\linewidth}
    \centering\includegraphics[width=\linewidth]{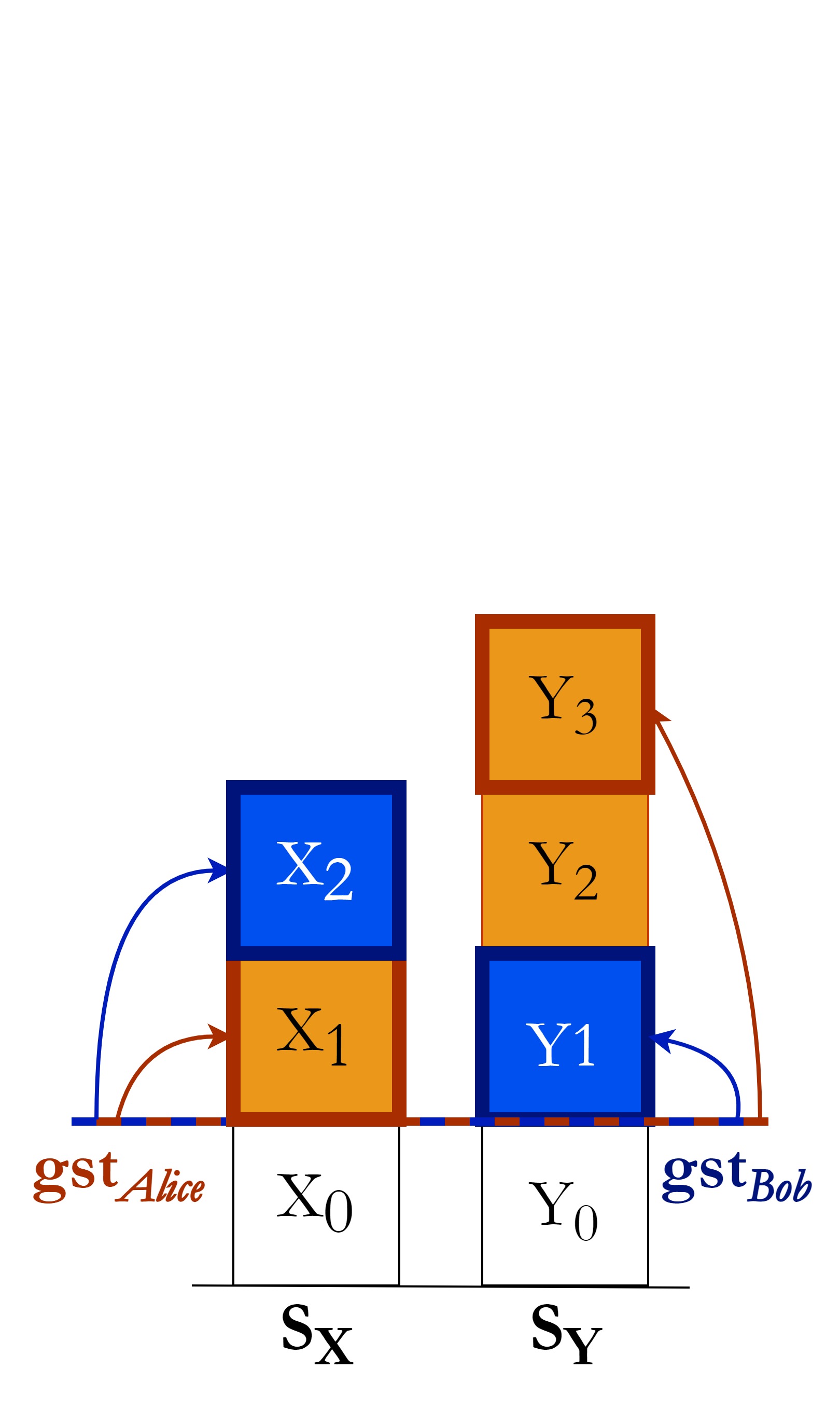}
    \caption{\footnotesize Initial}
  \end{subfigure}
  \begin{subfigure}[t]{.3\linewidth}
    \centering
    \includegraphics[width=\linewidth]{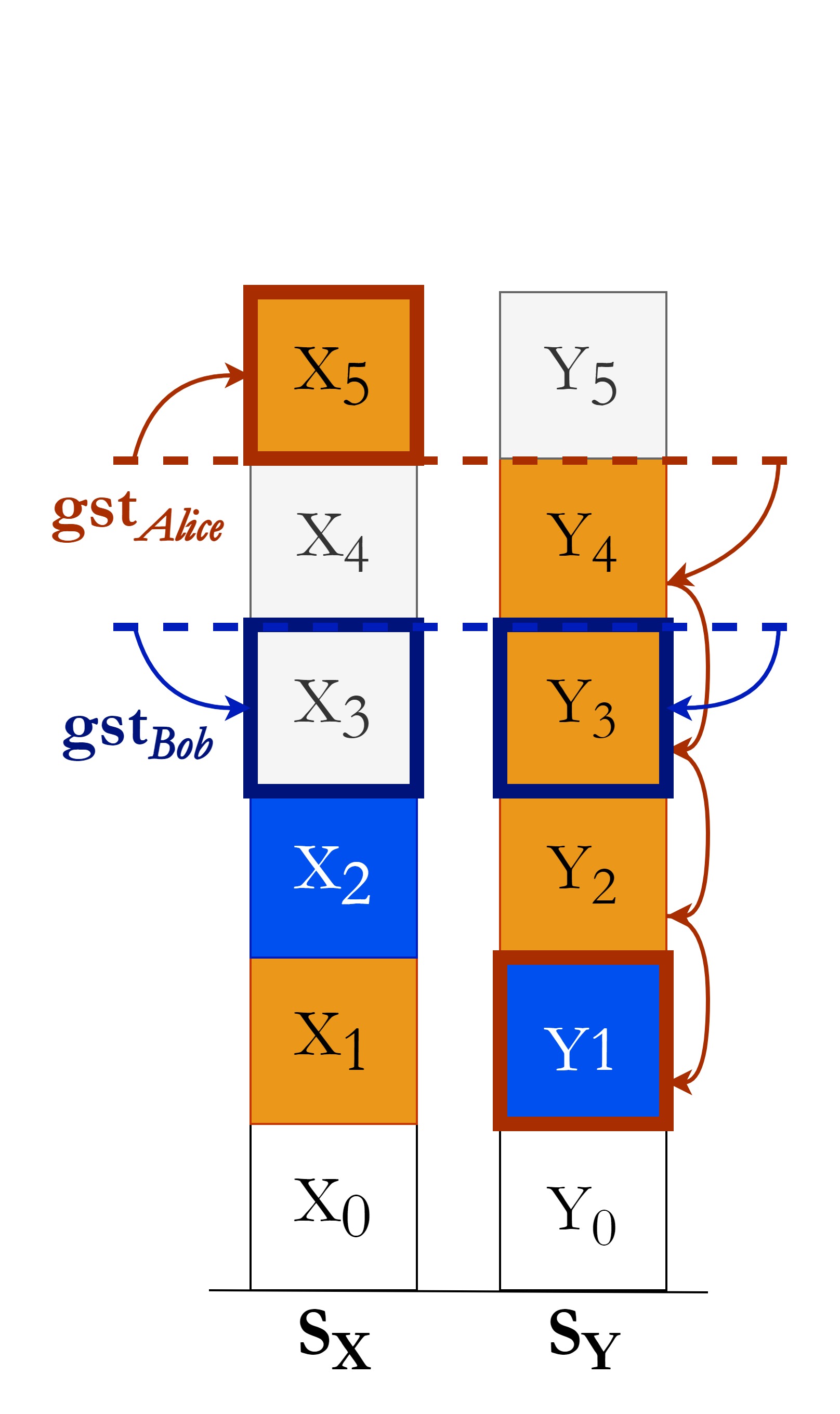}
    \caption{\footnotesize Eiger-PORT}
  \end{subfigure}
  \begin{subfigure}[t]{.3\linewidth}
    \centering
    \includegraphics[width=\linewidth]{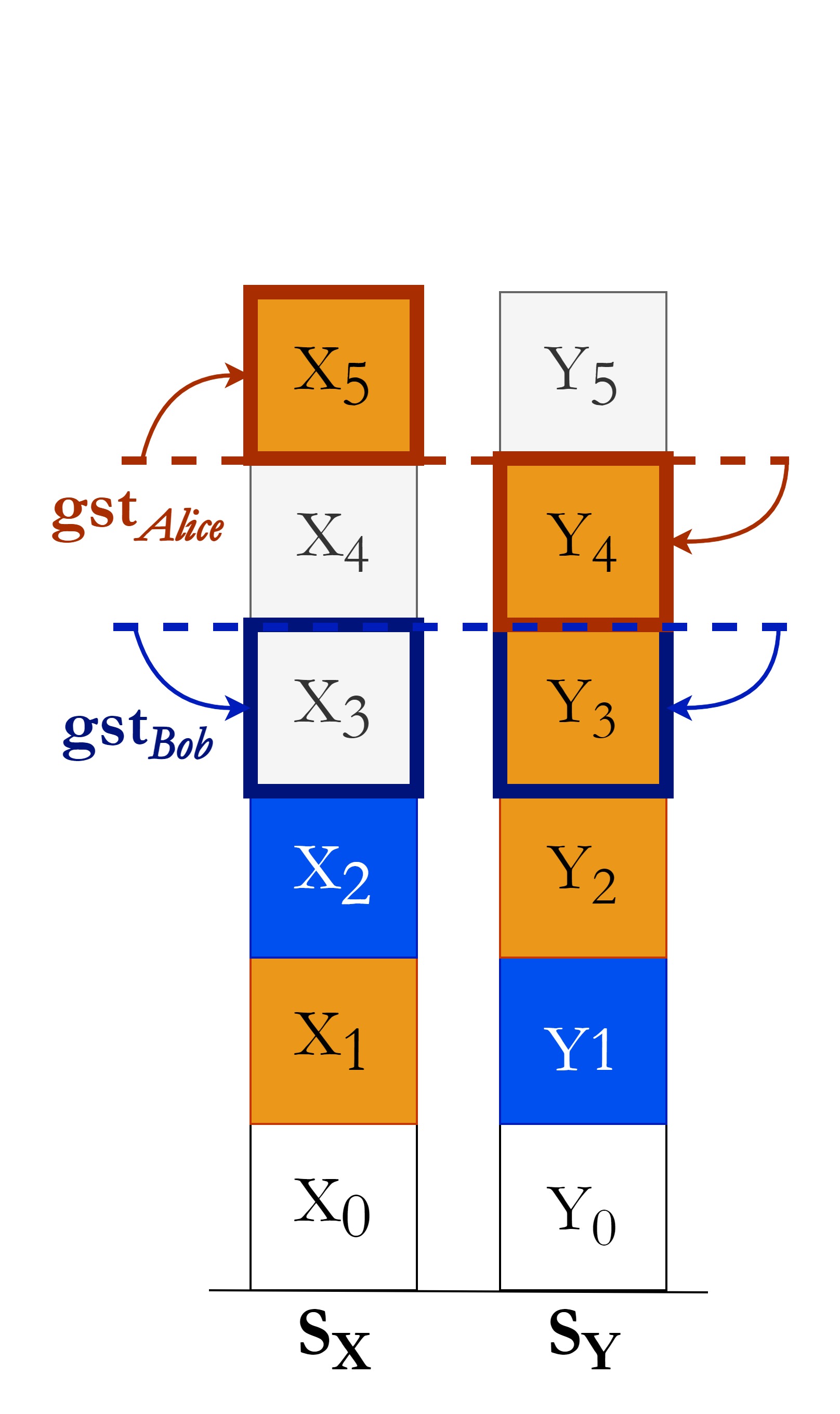}
    \caption{\footnotesize Eiger-PORT+}
  \end{subfigure}
  \captionsetup{skip=5pt}
   \caption{Alice and Bob reading (shown by arrows) from servers storing X and Y, illustrating convergence in \ouralg and lack thereof in \ep. Each square represents a version, and its color determines the version's writer, where orange, blue, and gray correspond to Alice, Bob, and other clients respectively.
   }
   \label{fig:ep_vs_epp}
\end{figure}

\begin{example}
    \Cref{fig:ep_vs_epp} illustrates the difference between \ep and \ouralg in reading versions.
    In~\Cref{fig:ep_vs_epp}a, Alice (orange) and Bob (blue) have written some versions on servers (partitions storing keys) X and Y, and their $\inlisa{gst}$s are~$0$. 
    Alice and Bob each perform a transaction 
    reading from servers X and Y.
    In both protocols, Alice reads $\{X_1, Y_3\}$ and Bob $\{X_2, Y_1\}$ according to RYW.
    
    As new versions are added to the servers, the $\inlisa{gst}$s advance to higher timestamp values. Assume that versions $Y_1$ to $Y_4$ are conflicting writes, i.e., the transactions writing $Y_2$ to $Y_4$ had already started when $Y_1$ was committed. Alice and Bob then again read 
    keys X and Y. 
    Bob behaves the same in both protocols (cf.~Figures \ref{fig:ep_vs_epp}b and \ref{fig:ep_vs_epp}c): He reads the last committed versions below its $\inlisa{gst}$, $\{X_3, Y_3\}$, as he has no writes above its $\inlisa{gst}$ and the read versions are written by other clients.
    However, Alice's behavior differs in the two protocols. Given Alice's new $\inlisa{gst}$, in \ep she reads $\{X_5, Y_1\}$ (cf.~\Cref{fig:ep_vs_epp}b), while in \ouralg she reads $\{X_5, Y_4\}$ (cf.~\Cref{fig:ep_vs_epp}c). As $X_5$ is her newest own write on X, this is the same for both protocols. But Alice has no newer own writes to Y and the latest committed version, $Y_4$, is Alice's write and has write conflicts. Thus,
    \ep performs a scan to find the latest version written by a \emph{different} client below Alice's $\inlisa{gst}$, i.e., $Y_1$, while \ouralg reads $Y_4$, irrespective of its writer or write conflicts.

    Hence, in \ep, Alice reads $Y_3$ before $Y_1$,
    while Bob reads $Y_1$ before~$Y_3$, 
    which results in diverging client views. This behaviour is allowed by TCC where convergence is not required. In contrast, in \ouralg, Bob reads $Y_1$ before $Y_3$ and Alice reads $Y_3$ before $Y_4$, both agreeing with the same convergent order, i.e., the versions' total order established on the servers ($Y_1$ < $Y_3$ < $Y_4$). 
\end{example}

\section{Formal Modeling and Verification}
\label{sec:eigerport-plus}

\begin{figure}[t]
\begin{center}
   \includegraphics[width=.85\columnwidth]{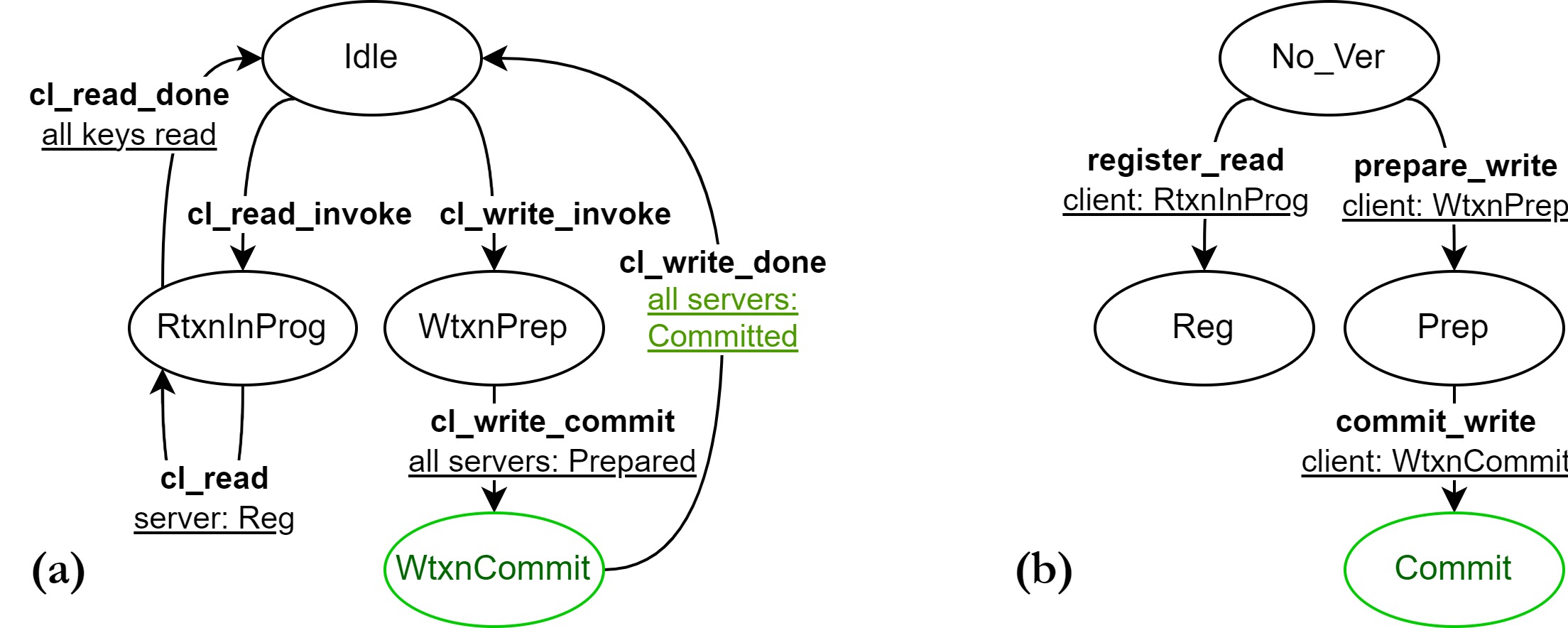}
\end{center}
\captionsetup{skip=0pt}
   \caption{\ouralg: state diagrams of (a) a client's $\inlisa{cl\_state}$ and (b) a server's $\inlisa{svr\_state}$ for a given transaction.  
   }
   \label{fig:ep+-states}
\end{figure}

We formally model \ouralg (\Cref{ssec:formalizing-ep+}),
and use our proof technique~(\Cref{ssec:epp-proof-technique}) to verify its TCCv isolation guarantee (\Cref{ssec:epp-refinement-mapping,ssec:epp-correctness}).

\subsection{Formalizing \ouralg}
\label{ssec:formalizing-ep+}

\looseness=-1
To model the protocol, we consider a distributed KVS with one transaction coordinator per client and several servers that handle the clients' transactions. For simplicity, we integrate the coordinator into the client and assume that each server manages one key. 
We also assume clients execute transactions sequentially.

We formalize \ouralg as an LTS.Its states are the protocol's global configurations, consisting of the clients' and the servers' local configurations. 
As depicted in \Cref{fig:ep+-states},
these local configurations include control states indicating a protocol execution's progress. Each event changes either one client's or one server's configuration and advances its respective control state, ensuring that the clients and servers are independent components with interleaved events. 
We allow these components to directly access each other's local configurations to exchange information. This is a standard abstraction in protocol modeling, which can later be refined into explicit message-passing communication.

We next define our LTS model's configurations and describe the sequences of events associated with read and write transactions in more detail. We slightly deviate from the Isabelle syntax to stay closer to standard mathematical notation.

\begin{figure}[t]
\centering
\begin{minipage}{0.515\textwidth}
\centering
\begin{lstlisting}[numbers=none]
--- transaction state
datatype 'v txn_state = 
 Idle |  
 RtxnInProg ts (key set) (key$\;\map\;$'v) | 
 WtxnPrep (key$\;\map\;$'v) | 
 WtxnCommit ts (key$\;\map\;$'v)

--- version state
datatype 'v ver_state =
 No_Ver |
 Reg |
 Prep ts ts 'v |
 Commit ts ts ts 'v (txid$\;\map\;$ts$\;\times\;$ts)
\end{lstlisting}
\end{minipage}
\vspace{-5pt}
\hspace{4pt}
\begin{minipage}{0.437\textwidth}
\centering
\begin{lstlisting}[numbers=none]
--- client configuration
record 'v cl_conf =
 cl_state : 'v txn_state
 cl_sn : sqn
 cl_clock : ts                       
 gst : ts
 lst_map : key $\fun$ ts
 
--- server configuration
record 'v svr_conf =
 svr_state : txid$\;\fun\;$'v ver_state
 svr_clock : ts
 lst : ts
    
\end{lstlisting}
\end{minipage}
\vspace{-5pt}
\begin{minipage}{0.8\textwidth}
\centering
\begin{lstlisting}[numbers=none]
--- global configuration
record 'v global_conf =
 cls : cl_id $\fun$ 'v cl_conf
 svrs : key $\fun$ 'v svr_conf
 rtxn_rts : txid $\map$ ts        --- three history variables
 wtxn_cts : txid $\map$ ts         
 cts_order : key $\fun$ $\listtype{txid}\;$   
\end{lstlisting}
\end{minipage}
\captionsetup{skip=5pt}
\caption{\ouralg: the client, server, and global configurations.}
\label{fig:ep+-configs}
\end{figure}

\medskip
\inlsec{Configurations}  
We model the client, server, and global configurations as records in Isabelle/HOL (\Cref{fig:ep+-configs}). The global configuration contains the client (\inlisa{cls}) and server (\inlisa{svrs}) local configurations, whose types are parameterized by the type \inlisa{'v} of values, and three history variables, which we discuss in \Cref{sec:hist}.

Besides the global safe time, \inlisa{gst}, and the \inlisa{lst\_map}, already discussed above, the client configuration consists of its state, \inlisa{cl\_state}, a transaction sequence number, \inlisa{cl\_sn}, and the client's (Lamport) clock, \inlisa{cl\_clock}. The state is described by the type \inlisa{txn\_state}, which has four constructors for idle (\inlisa{Idle}), read transaction in progress (\inlisa{RtxnInProg}), write transaction prepare (\inlisa{WtxnPrep}), and write transaction commit (\inlisa{WtxnCommit}). The latter three states include a key-value map describing the values (to be) read or written.

A server's configuration consists of a function mapping each transaction ID to a version state (\inlisa{ver\_state}), the server's (Lamport) clock, and its local safe time (\inlisa{lst}). A version state may either be idle (\inlisa{No\_Ver}), registered read (\inlisa{Reg}) for a read transaction, or prepared (\inlisa{Prep}) or committed (\inlisa{Commit}) for a write transaction. The latter two states include timestamps in their parameters and the commit state includes a readermap to record information about the transactions reading this version (similar to the abstract model's reader sets).

\smallskip
\emph{Notation.} To improve readability, we sometimes omit the projections \inlisa{cls} and \inlisa{svrs}, writing, e.g., \inlisa{(gst$\;$s$\;$cl)} for \inlisa{(gst$\;$(cls$\;$s$\;$cl))}.

\medskip
\inlsec{Read-only transactions} proceed as follows (cf.~\Cref{fig:ep+-states}). 
The \inlisa{cl\_read\_invoke} event of a client starts a read-only transaction and transitions from \inlisa{Idle} to \inlisa{(RtxnInProg clk keys $\emptyset$)} state, where \inlisa{clk} is the client's current clock reading, \inlisa{keys} is the (finite and non-empty) set of keys to be read, and $\emptyset$ is the empty key-value mapping, where the subsequently read values will be recorded. As mentioned, this event also updates the  client's global safe time, \inlisa{gst}, to the minimum of the servers' local safe times stored in \inlisa{lst\_map}. As a result, more up-to-date versions of certain keys may become visible to the client.  

\looseness=-1
Once a client has invoked a read, the involved servers (in \inlisa{keys}) follow with a \inlisa{register\_read} event, where they transition from \inlisa{No\_Ver} to \inlisa{Reg} state and access the client's \inlisa{gst} to determine the latest own write newer than the \inlisa{gst} (if any), or the latest transaction with a commit timestamp \inlisa{cts $\leq$ gst}. This transaction is recorded in the read version's readermap along with the current \inlisa{lst} and updated server clock. 
This information is then accessed in the client's subsequent \inlisa{cl\_read} event, reading the version's value and updating its own clock and \inlisa{lst\_map}.
When the client has read all requested values, i.e., \inlisa{dom kv\_map$\;$=$\;$keys} holds for its state \inlisa{(RtxnProg clk keys kv\_map)}, the event \inlisa{cl\_read\_done} brings it back to  \inlisa{Idle}. 

\medskip
\inlsec{Write-only transactions} are initiated by the \inlisa{cl\_write\_invoke} event, in which the client transitions from \inlisa{Idle} to the state \inlisa{(WtxnPrep kv\_map)}. The key-value map \inlisa{kv\_map} describes the keys and associated values to be written and corresponds to the transaction's (write-only) fingerprint (\Cref{subsec:abstract-transaction-model}). Once all servers have followed into their prepared state, the client can execute the \inlisa{cl\_write\_commit} event, which is defined in \Cref{fig:epp-client-commit}. 

\begin{figure}[t]
\centering
\begin{minipage}{0.94\textwidth}
\begin{lstlisting}
definition cl_write_commit cl kv_map cts sn u clk s s' $\longleftrightarrow$
  --- guards:
  cl_state s cl = WtxnPrep kv_map $\land$
  ($\forall$k $\in$ dom kv_map. is_prepared (svr_state s k (Tn sn cl)) $\land$
  cts = Max {get_prepared_ts (svr_state s k (Tn sn cl)) $\mid$ 
             k $\in$ dom kv_map} $\land$
  sn = cl_sn s cl $\land$
  u = $\refmapviews{\_\epp{}}$ s cl $\land$
  clk = cts + 1 $\land$
  --- updates (unmentioned variables remain unchanged):
  cl_state s' cl = WtxnCommit cts kv_map $\land$
  cl_clock s' cl = clk $\land$
  wtxn_cts s' = (wtxn_cts s)(Tn sn cl $\mapsto$ cts) $\land$
  cts_order s' = extend_cts_order s (Tn sn cl) cts kv_map
\end{lstlisting}
\end{minipage}
\captionsetup{skip=0pt}
\caption{\ouralg's client commit event.}
\label{fig:epp-client-commit}
\end{figure}

\looseness=-1
This event has eight parameters:
the client ID \inlisa{cl}, 
the key-value map \inlisa{kv\_map}, 
the commit timestamp \inlisa{cts}, which is the maximum of the involved server's prepared timestamps for the current transaction \inlisa{(Tn\;sn\;cl)} (lines~5-6), 
the client's current sequence number \inlisa{sn} (line~7),
the abstract view \inlisa{u} (line 8, see \Cref{ssec:epp-refinement-mapping}),
the updated Lamport clock \inlisa{clk} (line~9), and the global states \inlisa{s} and \inlisa{s'} before and after the event. 
As \inlisa{cts\;>\;cl\_clock} always holds here, there is no need to take their maximum to determine \inlisa{clk}.
The guards at lines~3 and~4 require that the client is in the prepared state and that all involved servers have followed into their own prepared state.
The client's state is updated to \inlisa{(WtxnCommit\;cts\;kv\_map)} (line~11) and the clock is updated (line~12). We will discuss the history variable updates at lines 13-14 in \Cref{ssec:epp-refinement-mapping}.  

After the client's commit event, the involved servers commit the transaction on their side using  \inlisa{commit\_write} events. When all servers have committed, the client executes the \inlisa{cl\_write\_done} event to return to its idle state.

\subsection{Proof Technique}
\label{ssec:epp-proof-technique}

We now discuss our protocol verification technique based on refinement, and a technique for commuting independent events required to complement refinement.

The main goal of our verification is to prove the following result, stating that all of \ouralg's reachable states are allowed by the abstract model $\isomodel{\TCCv}$:
\begin{equation}
\label{eq:reach-inclusion}
\refmap{r}{\epp}~(\reach~\epp) \subseteq \reach~\isomodel{\TCCv}.
\end{equation}
Recall from \Cref{subsec:lts-refinement}, that this would follow from a proof of $\epp{} \refines_{\refmap{r}{\epp},\pi_{\,\epp}} \isomodel{\TCCv}$, for suitable refinement mappings $\refmap{r}{\epp}$ and $\pi_{\,\epp}$ on protocol states and events. 
However, such a direct proof would fail for the following reason.

\looseness=-1
\ouralg uses timestamps to define an order on versions and to identify ``safe-to-read'' versions. 
Hence, to ensure that clients always read the latest version in their view, the refinement mapping must reconstruct the version lists of the abstract KVS in the order of their commit timestamps. Otherwise, the proof of the abstract guard $\LWW$ (\Cref{subsec:abstract-transaction-model}) will fail. 
However, the \emph{execution order} of commits and the \emph{order of the associated commit timestamps} may not coincide. We call such commits \emph{\invcmts}. 
Having \invcmts in an execution may require \emph{inserting} a key's new version to its version list rather than \emph{appending} it. Since the abstract model only ever appends new versions at the end of the version lists, the refinement proof alone would fail for executions with \invcmts. 

To address this problem, we introduce an extra proof step to reorder the \invcmts before the refinement. 
To this end, we define a modified protocol model $\widehat{\epp{}}$ 
that restricts transaction commits to those that do not introduce any \invcmts. We decompose the proof of \eqref{eq:reach-inclusion} into the following two steps:
\begin{align}
\label{eq:reduction-guarantee}
\reach~\epp{} & = \reach~\widehat{\epp{}}, \\
\label{eq:reach-inclusion-restricted}
\refmap{r}{\epp}~(\reach~\widehat{\epp{}}) & \subseteq \reach~\isomodel{\TCCv}. 
\end{align}
\looseness=-1
We prove \eqref{eq:reach-inclusion-restricted} by the refinement $\widehat{\epp{}} \refines_{\refmap{r}{\epp},\pi_{\,\epp}} \isomodel{\TCCv}$, which works for the restricted model. To prove \eqref{eq:reduction-guarantee}, we use a proof technique based on Lipton's reduction method~\cite{DBLP:journals/cacm/Lipton75} to successively reorder \invcmts 
in executions by commuting causally independent events, while preserving the executions' final state.

\subsection{Refinement Mapping}
\label{ssec:epp-refinement-mapping}

For the refinement proof, we need to find the refinement mappings $\pi_{\;\epp}$ and $\refmap{r}{\epp}$.
To define $\pi_{\epp}$, we must identify protocol events that refine the abstract commit event and the abstract view extension event. The protocol events \inlisa{cl\_read\_done} and \inlisa{cl\_write\_commit} refine the abstract commit event since reads and writes are guaranteed to commit after these events are executed. The event \inlisa{cl\_read\_invoke} refines the abstract view extension event, as this event updates a client's $\inlisa{gst}$ and thus extends its view. All other events refine $\Skip$. 

To define~$\refmap{r}{\epp}$, we must reconstruct an abstract configuration from \ouralg's protocol configuration $s$, i.e.,
\[
\refmap{r}{\epp}\;s = (\refmapkvs{}\;s,\;\refmapviews{}\;s),
\]
where the function $(\refmapkvs{}$\;$s)$ reconstructs the abstract KVS and $(\refmapviews{}$\;$s)$ reconstructs the abstract client views. To help define these components, we add \emph{history variables} to the global configuration. 
We now describe these history variables and then the functions $\refmapkvs{\_\epp}$ and $\refmapviews{\_\epp}$.

\smallskip
\inlsec{History variables}
\label{sec:hist} 
\looseness=-1
The global configuration includes three history variables. 
The variables \inlisa{rtxn\_rts} and \inlisa{wtxn\_cts} serve as shortcuts to respectively map transaction IDs directly to the read timestamp (\inlisa{gst}) and commit timestamps of the corresponding read-only and write-only transactions. These variables get updated in the corresponding commit events. 
The variable \inlisa{cts\_order} maps each key to a list of client-committed write-only transactions,  \emph{ordered by their commit timestamps}. The client commit event extends \inlisa{cts\_order} by \emph{inserting} the committed transaction's ID at the position corresponding to its commit timestamp into the transaction ID list of each key written by the transaction (line~13 in \Cref{fig:epp-client-commit}). 
This variable is used to facilitate the reconstruction of the abstract KVS.

\smallskip
\inlsec{Abstract KVS} 
The function $\refmapkvs{\_\epp{}}$ reconstructs the abstract KVS from the \inlisa{cts\_order} history variable by mapping each key's list of client-committed transactions to an abstract version list. For a given key \inlisa{k} and transaction \inlisa{t} in the list, we extract each version's value and 
readerset from server \inlisa{k}'s (prepared or committed) state for \inlisa{t}. 
Since the abstract model always reads the latest versions in a client's view, as expressed by the guard \inlisa{\LWW} of the abstract commit event (see \Cref{subsec:abstract-transaction-model}), the \inlisa{cts\_order} must be sorted by commit timestamp.

\smallskip
\inlsec{Abstract views} 
We define the function $\refmapviews{\_\epp{}}$,  reconstructing the abstract views from  \ouralg's model configurations in two steps. We first construct a function \inlisa{get\_view}, where \inlisa{(get\_view s cl k)} denotes the set of client-committed transactions~\inlisa{t}, whose commit timestamp is less than or equal to the client's \inlisa{gst}, i.e., \inlisa{(wtxn\_cts s t) $\leq$ (gst (cls s cl))} or that are the client \inlisa{cl}'s own transactions (for RYW).
Second, we use \inlisa{cts\_order} to map the transactions IDs in the range of \inlisa{get\_view} to their positions in the \inlisa{cts\_order}, which correspond to indices into the abstract version lists.

\subsection{Correctness: \ouralg satisfies \TCCv}
\label{ssec:epp-correctness} 
%
We can now state our main result of this section.
\begin{theorem}[Correctness of \ouralg]
\label{thm:epp-correct}
The Eiger-PORT+ model \epp{} satisfies $\TCCv$, i.e., $\refmap{r}{\epp}~(\reach~\epp) \subseteq \reach~\isomodel{\TCCv}$. 
\end{theorem}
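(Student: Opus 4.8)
The plan is to establish the reachable-state inclusion \eqref{eq:reach-inclusion} exactly through the two-step decomposition announced in \Cref{ssec:epp-proof-technique}: first the reduction identity \eqref{eq:reduction-guarantee}, $\reach~\epp = \reach~\eppordered$, and then the restricted refinement inclusion \eqref{eq:reach-inclusion-restricted}, $\refmap{r}{\epp}~(\reach~\eppordered) \subseteq \reach~\isomodel{\TCCv}$. Chaining the two yields the theorem immediately, since the equality of reachable-state sets lets us replace $\epp$ by $\eppordered$ on the left-hand side of \eqref{eq:reach-inclusion}. I would prove the two steps independently.

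For the reduction step \eqref{eq:reduction-guarantee}, the inclusion $\reach~\eppordered \subseteq \reach~\epp$ is immediate, because $\eppordered$ is $\epp$ with a strictly stronger commit guard, so every restricted execution is already an $\epp$ execution. The content lies in the converse $\reach~\epp \subseteq \reach~\eppordered$. Here I would take an arbitrary execution $\tau$ of $\epp$ reaching a state $s$ and show $s$ is reachable in $\eppordered$ as well. If $\tau$ contains an \emph{inverted commit} --- a client commit whose commit timestamp is smaller than that of an already-committed version of a shared key --- I would isolate an adjacent pair of events that are out of timestamp order and, via Lipton-style commutation, swap them without changing the execution's final state. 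The key commutation facts are that events touching distinct clients or servers are independent, and that a later-executed commit carrying a \emph{smaller} commit timestamp is causally independent of the earlier-executed, larger-timestamp commit it must pass: by monotonicity of the Lamport clocks the smaller-timestamp transaction cannot causally depend on the larger one, and the earlier event trivially cannot depend on a later one. Iterating this swap under a well-founded measure counting timestamp inversions removes all \invcmts and produces an $\eppordered$ execution with the same final state $s$.

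For the refinement step \eqref{eq:reach-inclusion-restricted}, I would instantiate the refinement machinery of \Cref{subsec:lts-refinement} with the mappings fixed in \Cref{ssec:epp-refinement-mapping}: $\refmap{r}{\epp}\;s = (\refmapkvs{}\;s, \refmapviews{}\;s)$ on states, and $\pi_{\,\epp}$ sending \inlisa{cl\_read\_done} and \inlisa{cl\_write\_commit} to the abstract \isaco{commit}, \inlisa{cl\_read\_invoke} to \isaco{xview}, and every other protocol event to $\Skip$. It then suffices to prove $\eppordered \refines_{\refmap{r}{\epp},\pi_{\,\epp}} \isomodel{\TCCv}$, i.e.\ the initialization condition together with, for each event, guard strengthening and update correspondence. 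For the $\Skip$-events the obligation reduces to showing that $\refmap{r}{\epp}$ is unchanged across the step, since such events alter only protocol-internal state that $\refmapkvs{}$ and $\refmapviews{}$ do not observe. The substantive obligations are the two commit-refining events and the view-extension event, and discharging them requires a library of protocol invariants that strengthen the refinement mapping: monotonicity of clocks and of \inlisa{gst}, freshness and client-indexing of transaction IDs (for the \isaco{nextTxids} premise), and consistency between each client's \inlisa{lst\_map} and \inlisa{gst} and the committed timestamps recorded on the servers. In particular, the closedness premise \eqref{eq:closedness} of $\canCommit_{\TCCv}$ follows from an invariant relating \inlisa{gst} to the stable frontier (every transaction below \inlisa{gst} is committed and its causal predecessors are already visible), while the $\LWW$ premise is discharged precisely because $\eppordered$ keeps \inlisa{cts\_order} sorted by commit timestamp: $\refmapkvs{}$ then lays out each key's versions in commit-timestamp order, so the client genuinely reads the latest version in its view.

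I expect the reduction step to be the main obstacle. Refinement, though invariant-heavy, is methodologically routine once the invariants are identified; by contrast, \eqref{eq:reduction-guarantee} is the non-standard ingredient introduced specifically to repair the mismatch between execution order and commit-timestamp order that would otherwise break $\LWW$. The delicate points are formulating independence and commutation lemmas strong enough to guarantee that swapping an inverted commit preserves the \emph{final state} (not merely reachability), and exhibiting a terminating reordering strategy --- a well-founded measure on executions whose minimal elements are exactly the executions free of inverted commits.
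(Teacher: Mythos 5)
Your proposal follows essentially the same route as the paper: the identical two-step decomposition into the reduction identity $\reach~\epp = \reach~\eppordered$ (proved by Lipton-style commutation of causally independent events under a well-founded measure on inverted commits, with the reverse inclusion immediate from the strengthened guard) and the refinement $\eppordered \refines_{\refmap{r}{\epp},\pi_{\,\epp}} \isomodel{\TCCv}$ discharged event-by-event with the same invariant families (timestamps, ID freshness, view closedness for $\canCommit_{\TCCv}$, and sortedness of \inlisa{cts\_order} for $\LWW$). The argument is correct and matches the paper's proof of \Cref{thm:epp-correct} via \Cref{lem:epp-reduction,lem:epp-refinement}.
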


As described in \Cref{ssec:epp-proof-technique}, we combine refinement and reduction in this proof. We devote the remainder of this subsection to sketching both parts of our proof, stated as \Cref{lem:epp-reduction,lem:epp-refinement} below, followed by describing the invariants used in these proofs.

\smallskip
\inlsec{Restricted model and reduction proof} 
We define the restricted model $\eppordered{}$ by adding a guard to the event \inlisa{cl\_write\_commit}, which requires that the unique commit timestamp \inlisa{(cts,$\;$cl)} of the client's transaction is greater than any commit timestamp of a transaction in the \inlisa{cts\_order}. This ensures that the client commit only appends, but does not insert, the new transaction into the \inlisa{cts\_order}.
For this model, we prove the following lemma.

\begin{lemma}
\label{lem:epp-reduction}
$\reach~\inlisa{\epp{}} = \reach~\inlisa{\eppordered{}}$.
\end{lemma}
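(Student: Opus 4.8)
The plan is to prove the two inclusions separately. The direction $\reach~\eppordered \subseteq \reach~\epp$ is immediate: the model $\eppordered$ is obtained from $\epp$ purely by \emph{strengthening} the guard of \inlisa{cl\_write\_commit} and leaving every other event untouched, so each transition of $\eppordered$ is also a transition of $\epp$, and any $\eppordered$-execution is verbatim an $\epp$-execution reaching the same state. All the content lies in the reverse inclusion $\reach~\epp \subseteq \reach~\eppordered$, which I would establish by a reduction argument in the spirit of Lipton. Given a reachable state $s$ of $\epp$, I fix a witnessing execution $\rho$ from an initial state to $s$ and call a \inlisa{cl\_write\_commit} occurrence an \emph{inverted commit} if, when it fires, some transaction already recorded in \inlisa{cts\_order} (for one of the keys it writes) carries a \emph{larger} commit timestamp; equivalently, the subsequence of commit events of $\rho$ is not sorted by commit timestamp. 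If $\rho$ has no \invcmts, then every \inlisa{cl\_write\_commit} only appends to \inlisa{cts\_order}, so $\rho$ already satisfies the added guard and is an $\eppordered$-execution. Otherwise I transform $\rho$ into an equivalent execution with strictly fewer inversions while preserving the final state $s$, and induct on the number of out-of-order commit pairs.

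The engine of the induction is a commutation lemma for two commit events $e'$ (of transaction $T'$, timestamp $c'$) and $e$ (of $T$, timestamp $c$) forming an adjacent inverted pair, i.e.\ $e'$ precedes $e$ in the commit subsequence, with no commit in between and $c' > c$. Two observations drive it. First, a timestamp--causality invariant: \ouralg's Lamport clocks guarantee that if one commit causally precedes another (through $\SO$ or through a write--read dependency $\WR$), then its commit timestamp is strictly smaller. Hence an \emph{inverted} pair is necessarily causally independent, and no intervening event $b$ can satisfy both $e' \to b$ and $b \to e$, since that would force $e' \to e$ and thus $c' < c$. Consequently each event lying between $e'$ and $e$ is causally independent of at least one of the two commits and can be pushed to one side, bringing $e'$ and $e$ adjacent. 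Second, two commits of distinct transactions act on disjoint state: they set \inlisa{cl\_state} and \inlisa{cl\_clock} of different clients, register different transaction IDs in \inlisa{wtxn\_cts}, and insert different IDs into \inlisa{cts\_order} at the position dictated by their commit timestamp.

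Because the prepared timestamps that determine $c$ and $c'$ are set by earlier prepare events, reordering the two commits changes neither timestamp; and since \inlisa{cts\_order} is maintained sorted, two order-preserving insertions into a sorted list commute. The parameter \inlisa{u} is not stored in the state, so it may simply be recomputed for each commit at its new position. Thus $e'\,e$ and $e\,e'$ are both enabled and reach the same state, and after the swap $e$ precedes $e'$, removing one inversion; termination follows from the decreasing inversion measure, and the resulting inversion-free execution witnesses $s \in \reach~\eppordered$.

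The main obstacle is exactly this commutation lemma, and within it the timestamp--causality invariant. I must prove, as an $\epp$-invariant, that commit timestamps are consistent with the causal order $\SO \cup \WR$ and are pairwise distinct (via the unique $(\inlisa{cts}, cl)$ pairs), and that \inlisa{cts\_order} stays sorted throughout. These facts guarantee (i) that only causally independent commits are ever inverted, so the reordering respects all genuine dependencies, and (ii) that the per-key insertions genuinely commute. Everything else --- the disjointness of the state footprints of distinct clients and transactions, and the termination of the bubbling process --- is routine once these invariants are in hand.
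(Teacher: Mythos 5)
Your proposal is correct and follows essentially the same route as the paper's own proof: the ``$\supseteq$'' inclusion from guard strengthening, and the ``$\subseteq$'' inclusion by a Lipton-style reduction that shows inverted commits are causally independent (via a timestamp--causality invariant), commutes adjacent independent events, and terminates via a decreasing inversion measure. The additional detail you supply (disjoint state footprints, commuting insertions into the sorted \inlisa{cts\_order}, recomputing the view parameter) is consistent with, and a reasonable elaboration of, the paper's sketch.
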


\begin{proof}[sketch] 
By construction of \eppordered{}, the inclusion ``$\supseteq$'' is easily shown by a refinement. 
For the inclusion ``$\subseteq$'', consider any execution $e$ of \epp{} ending in some state $s$.  We prove by reduction that we can reorder all \invcmts in $e$ of \epp{}, while preserving its final state $s$. We first show that the relevant events of two transactions with inverted commit timestamps are pairwise causally independent. We then prove that adjacent causally independent events in $e$ can be commuted. Using a measure function on executions, we show that this process terminates in an execution $\restricted{e}$ without \invcmts ending in state $s$, thus an execution of \eppordered{}. Hence, any state reachable in \epp{} is also reachable in \eppordered{}. \qed
\end{proof}

\inlsec{Refinement proof} 
Next, we establish a refinement between the restricted model $\eppordered$ and the abstract model $\isomodel{\TCCv}$ instantiated to $\TCCv$, using the refinement mapping defined in \Cref{ssec:epp-refinement-mapping}.

\begin{lemma}
\label{lem:epp-refinement}
$\refmap{r}{\epp}~(\reach~\eppordered) \subseteq \reach~\isomodel{\TCCv}$.
\end{lemma}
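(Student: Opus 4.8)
The plan is to establish the refinement $\eppordered{} \refines_{\refmap{r}{\epp},\pi_{\,\epp}} \isomodel{\TCCv}$ using the mappings defined in \Cref{ssec:epp-refinement-mapping}, and then invoke the general refinement theorem of \Cref{subsec:lts-refinement}, which yields $\refmap{r}{\epp}(\reach~\eppordered{}) \subseteq \reach~\isomodel{\TCCv}$ directly. Recall that $\pi_{\,\epp}$ sends \inlisa{cl\_read\_done} and \inlisa{cl\_write\_commit} to the abstract commit event, \inlisa{cl\_read\_invoke} to the abstract view-extension event, and every remaining protocol event to $\Skip$. The refinement therefore decomposes into the (routine) initial-state condition plus, for each protocol event, a guard-strengthening and an update-correspondence obligation, as spelled out in \Cref{subsec:lts-refinement}.

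For the events mapped to $\Skip$, I would show that they leave the abstract configuration fixed, i.e., $\refmap{r}{\epp}(s) = \refmap{r}{\epp}(s')$ whenever $s \to s'$ via such an event. Since $\refmap{r}{\epp}$ is defined entirely through the history variables \inlisa{cts\_order}, \inlisa{wtxn\_cts}, \inlisa{rtxn\_rts} (and the server states that $\refmapkvs{\_\epp}$ reads off), this reduces to checking that events like \inlisa{register\_read}, \inlisa{commit\_write}, and \inlisa{cl\_write\_invoke} touch none of these in a way visible to $\refmap{r}{\epp}$. For \inlisa{cl\_read\_invoke} refining view extension, the two abstract guards, view monotonicity $\views(cl) \sqsubseteq u$ and wellformedness \inlisa{wf}, follow because advancing \inlisa{gst} to the minimum of \inlisa{lst\_map} only enlarges \inlisa{get\_view} pointwise, together with an invariant stating that the reconstructed view is always wellformed.

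The crux is the two commit obligations. For \inlisa{cl\_read\_done} and \inlisa{cl\_write\_commit} I must discharge guard strengthening for all TCCv premises, namely view monotonicity, \inlisa{canCommit}, \inlisa{RYW}, \inlisa{LWW}, both wellformedness conditions, and transaction-ID freshness, as well as the update correspondence $\refmap{r}{\epp}(U^2_e(s)) = U^1_{\pi(e)}(\refmap{r}{\epp}(s))$. This is exactly where the restricted model earns its keep. Because $\eppordered{}$ forbids \invcmts, the update \inlisa{extend\_cts\_order} \emph{appends} the committing transaction to each key's \inlisa{cts\_order}, matching the abstract \inlisa{UpdateKV}, which only ever appends a new version; an insertion in the middle would break the correspondence. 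By the same token, the sortedness of \inlisa{cts\_order} by commit timestamp is what makes \inlisa{LWW} hold for read transactions: the version a client reads is the last one in its view and hence genuinely the latest. (For write-only transactions \inlisa{LWW} is vacuous, since their fingerprint has no reads.)

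None of these guards can be shown from the bare reachable state; they require protocol invariants that I would establish separately, by simultaneous induction over reachable states, and use to strengthen the refinement mapping. The key ones are: (i) \inlisa{cts\_order} is sorted by commit timestamp, preserved precisely because of the $\eppordered{}$ guard; (ii) \inlisa{gst} is a stable frontier, i.e., every write transaction with commit timestamp at most \inlisa{gst} is already committed on all servers; and (iii) a causal-closure invariant connecting the protocol's timestamp and safe-time mechanism to the abstract relation $\SO \cup \WR_{\kvs}$. I expect invariant (iii), needed for the closedness premise of \inlisa{canCommit} in \eqref{eq:closedness}, to be the main obstacle. Concretely, I must show that the visible-transaction set induced by \inlisa{get\_view} is closed under $(\SO \cup \WR_{\kvs})^{-1}$ modulo $\readonly(\kvs)$: session-order closure follows from \inlisa{gst} monotonicity and the inclusion of own writes (which also gives \inlisa{RYW}), whereas write-read closure requires that whenever a read observes a writer $t'$, the timestamp discipline forces $t'$'s commit timestamp below \inlisa{gst} or makes $t'$ an own transaction of the reading client. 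Identifying the mutually dependent invariants that make this argument close, and discharging them together inductively, is the technically demanding part of the proof.
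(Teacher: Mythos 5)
Your proposal follows essentially the same route as the paper's proof: the same event mapping for $\pi_{\,\epp}$, the per-event guard-strengthening and update-correspondence obligations, the observation that the no-\invcmts restriction of $\eppordered{}$ is what makes \inlisa{extend\_cts\_order} append (matching the abstract \inlisa{UpdateKV}) and keeps \inlisa{cts\_order} sorted for $\LWW$, and the identification of the view-closedness invariant for $\canCommit$, supported by timestamp invariants, as the hardest part. No substantive differences to report.
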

\begin{proof}[sketch] 
We show guard strengthening and update correspondence for every event of \eppordered{}. This is easy for most events, which refine $\Skip$. The interesting cases are the read invoke event, which refines the abstract view extension event, and the client commit and read done events, refining the abstract commit event. 

We focus here on the client commit event. The update correspondence proof relies on the absence of \invcmts in \eppordered{} and thus client commits appending versions to KVS version lists.
For guard strengthening, we must show that all guards of the abstract commit event (cf.~\Cref{subsec:abstract-transaction-model}) are implied by the concrete guards. We discuss the most interesting ones. View atomicity (part of view wellformedness) holds by construction, since all versions of a transaction have the same \inlisa{cts} and the abstracted view includes all transactions with a \inlisa{cts} below the client's \inlisa{gst} (and also its own writes).
Similarly, we prove that $\LWW$ holds, i.e., a client reads the latest version in its view.
To show that $\canCommit$ holds, we prove an invariant stating that the clients' views remain closed under $\SO \cup \WR_{\kvs}$. This proof in turn requires several invariants about timestamps. \qed
\end{proof}

\inlsec{Invariants and lemmas} \label{subsubsec:invariants-ep+}
Our proofs rely on numerous invariants and lemmas. We present the most important ones categorized as follows.
\begin{itemize}
    \item \textbf{Freshness of transaction IDs}: The clients' current transaction ID is fresh, i.e., does not occur in the KVS until the commit.
    \item \textbf{Past and future transactions}: stating that the respective client and servers are in particular start states (e.g., \inlisa{Idle}) or end states (e.g., \inlisa{Commit}).
    \item \textbf{Views}: These invariants include view wellformedness, view closedness (for $\canCommit_{\TCCv})$, and session guarantees (monotonic reads and $\RYW$).
    \item \textbf{Timestamps}: This category includes lemmas showing the monotonic increase of timestamps and the following invariant for any client $\inlisa{cl}$ and server~$\inlisa{k}$: 
    \begin{center}
        \inlisa{gst s cl} < \inlisa{lst\_map s cl k} < \inlisa{lst s k} < \inlisa{svr\_clock s k}.
    \end{center}
    This invariant states that the following timestamps are in a strictly increasing order: client $\inlisa{cl}$'s global safe time, client $\inlisa{cl}$'s entry for server $\inlisa{k}$ in its map of local safe times, server $\inlisa{k}$'s local safe time, and $\inlisa{k}$'s local clock value.

    \item \textbf{Client commit order}: 
    The \inlisa{cts\_order} history variable is sorted by commit timestamps and contains only client-committed (and distinct) transactions.   
\end{itemize}

Note that the first three categories are generic and many of their invariants directly imply related guards needed in the abstract commit event's refinement. The last two categories can easily be adapted to other timestamp-based protocols.

\section{Deployment and Evaluation}
\label{sec:deployment-evaluation}

\looseness=-1
We have implemented and deployed our \ouralg protocol on a cluster for a comprehensive performance evaluation~\cite{ghasemirad_2025_14622074}. 
\ouralg pushes the limit of the state-of-the-art, with \emph{superior performance} and a \emph{stronger isolation guarantee}.


\begin{figure*}
  \centering
    \resizebox{.6\linewidth}{!}{
         \pgfplotsset{%
eiger legend/.style={legend image code/.code={%
\node[##1,anchor=west] at (0cm,0cm){\pgfuseplotmark{o}};
\path[#1](0.25cm,-0.05cm)rectangle(.38cm,.15cm);
}},
ep legend/.style={legend image code/.code={%
\node[##1,anchor=west] at (0cm,0cm){\pgfuseplotmark{square}};
\path[#1](0.25cm,-0.05cm)rectangle(.38cm,.15cm);
}},
epp legend/.style={legend image code/.code={%
\node[##1,anchor=west] at (0cm,0cm){\pgfuseplotmark{triangle}};
\path[#1](0.25cm,-0.05cm)rectangle(.38cm,.15cm);
}},
 }

\begin{tikzpicture}
\centering
  \begin{axis}[
      hide axis,
      width=2cm, 
      height=2cm, 
      legend style={at={(0.5,-0.1)},anchor=south, legend columns=3, draw=none},
      /tikz/every even column/.append style={column sep=0.2cm}
  ]
  \addlegendimage{color=blue,pattern color=blue,pattern=crosshatch dots, eiger legend}
  \addlegendentry{\footnotesize Eiger}
  \addlegendimage{color=orange,pattern color=orange,pattern=north west lines,
    ep legend}
  \addlegendentry{\footnotesize Eiger-PORT}
  \addlegendimage{color=purple, fill=purple, epp legend}
  \addlegendentry{\footnotesize Eiger-PORT+}
  \addplot[draw=none] coordinates {(0,0)}; 
  \end{axis}
\end{tikzpicture}
   }
  \resizebox{\linewidth}{!}{
      \raisebox{-0.18cm}{\begin{tikzpicture}[scale=1]
    \begin{axis}[
        title={(a)},
        tick style={draw=none},
        xlabel={Tput (txns/s)},
        ylabel={Latency (ms)},
        xlabel style={yshift=-5pt},
        ylabel style={yshift=8pt},
        ymax=200,
        cycle multiindex* list={
            color list,
            mark list*,
        },
        width=3in, 
        height=2in, 
        title style={font=\normalfont\huge\bfseries},
        tick label style={font=\normalfont\LARGE},
        label style={font=\normalfont\huge},
    ]
    \addplot[color=blue,mark=o,mark size=3pt]
        table [x=throughput,y=latency, col sep=comma]
         {plots/data/NumClients/EIGER.csv};
    
    \addplot[color=orange,mark=square,mark size=3pt]
        table [x=throughput,y=latency, col sep=comma]
         {plots/data/NumClients/EIGER_PORT.csv};
    
    \addplot[color=purple,mark=triangle,mark size=3pt]
        table [x=throughput,y=latency, col sep=comma]
         {plots/data/NumClients/EIGER_PORT_PLUS.csv};
    
    \end{axis}
    \end{tikzpicture}}
      \qquad \qquad
        \begin{tikzpicture}[scale=1]
    \begin{axis}[
        title={(b) },
        tick style={draw=none},
        xlabel={Number of Clients},
        ylabel={Tput (txns/s)},
        xlabel style={yshift=-5pt},
        ylabel style={yshift=8pt},
        ymax=30000,
        cycle multiindex* list={
            color list,
            mark list*,
        },
        width=3in, 
        height=2in, 
        legend style={at={(.3,.94)}, anchor=north},
        title style={font=\normalfont\huge\bfseries},
        tick label style={font=\normalfont\LARGE},
        label style={font=\normalfont\huge},
    ]
    \addplot[color=blue,mark=o,mark size=3pt]
        table [x=num_clients,y=throughput, col sep=comma]
         {plots/data/NumClients/EIGER.csv};
    
    \addplot[color=orange,mark=square,mark size=3pt]
        table [x=num_clients,y=throughput, col sep=comma]
         {plots/data/NumClients/EIGER_PORT.csv};
    
    \addplot[color=purple,mark=triangle,mark size=3pt]
        table [x=num_clients,y=throughput, col sep=comma]
         {plots/data/NumClients/EIGER_PORT_PLUS.csv};
    
    \end{axis}
    \end{tikzpicture}  \qquad \qquad
         \begin{tikzpicture}[scale=1]
    \begin{axis}[
        title={(c)},
        tick style={draw=none},
        xlabel={Number of Servers},
        ylabel={Tput (txns/s)},
        ymax=30000,
        cycle multiindex* list={
            color list,
            mark list*,
        },
        xlabel style={yshift=-5pt},
        ylabel style={yshift=8pt},
        width=3in, 
        height=2in, 
        legend style={at={(.3,.94)}, anchor=north},
        title style={font=\normalfont\huge\bfseries},
        tick label style={font=\normalfont\LARGE},
        label style={font=\normalfont\huge},
    ]
    \addplot[color=blue,mark=o,mark size=3pt]
        table [x=num_servers,y=throughput, col sep=comma]
         {plots/data/NumServers/EIGER.csv};
    
    \addplot[color=orange,mark=square,mark size=3pt]
        table [x=num_servers,y=throughput, col sep=comma]
         {plots/data/NumServers/EIGER_PORT.csv};
    
    \addplot[color=purple,mark=triangle,mark size=3pt]
        table [x=num_servers,y=throughput, col sep=comma]
         {plots/data/NumServers/EIGER_PORT_PLUS.csv};
    
    \end{axis}
    \end{tikzpicture} 
   }
   \resizebox{\linewidth}{!}{
   }
   \resizebox{\linewidth}{!}{    
    \begin{tikzpicture}[scale=1]
    \begin{axis}[
        title={(d)},
        tick style={draw=none},
        xlabel={Skew Factor},
        ylabel={Normalized Tput},
        xlabel style={yshift=-5pt},
        ylabel style={yshift=8pt},
        ymax=2.58,
        ybar,
        bar width=0.17cm,
        cycle multiindex* list={
            color list,
            mark list*,
        },
        width=3in, 
        height=2in, 
        xtick=data, 
        xticklabels={0,.3,.7,.8,.9,.99, 1.1},
        legend style={at={(.28,.93)}, anchor=north, fill=none},
        title style={font=\normalfont\huge\bfseries},
        tick label style={font=\normalfont\LARGE},
        label style={font=\normalfont\huge},
    ]
    \addplot[color=blue,pattern color=blue,pattern=crosshatch dots]
        table [x expr=\coordindex,y=normalized_throughput, col sep=comma]
         {plots/data/ZipfNormalized/EIGER.csv};
    \addplot[color=orange,pattern color=orange,pattern=north west lines]
        table [x expr=\coordindex,y=normalized_throughput, col sep=comma]
         {plots/data/ZipfNormalized/EIGER_PORT.csv};
    \addplot[color=purple,fill]
        table [x expr=\coordindex,y=normalized_throughput, col sep=comma]
         {plots/data/ZipfNormalized/EIGER_PORT_PLUS.csv};
    \end{axis}
    \end{tikzpicture}\qquad \qquad
        \begin{tikzpicture}[scale=1]
    \begin{axis}[
        title={(e)},
        tick style={draw=none},
        xlabel={Number of Clients},
        ylabel={Latency (ms)},
        xlabel style={yshift=-5pt},
        ylabel style={yshift=8pt},
        ymax=200,
        cycle multiindex* list={
            color list,
            mark list*,
        },
        legend style={at={(.7,.34)}, anchor=north},
        width=3in, 
        height=2in, 
       title style={font=\normalfont\huge\bfseries},
        tick label style={font=\normalfont\LARGE},
        label style={font=\normalfont\huge},
    ]
    \addplot[color=blue,mark=o,mark size=3pt]
        table [x=num_clients,y=latency, col sep=comma]
         {plots/data/NumClients/EIGER.csv};
    
    \addplot[color=orange,mark=square,mark size=3pt]
        table [x=num_clients,y=latency, col sep=comma]
         {plots/data/NumClients/EIGER_PORT.csv};
    
    \addplot[color=purple,mark=triangle,mark size=3pt]
        table [x=num_clients,y=latency, col sep=comma]
         {plots/data/NumClients/EIGER_PORT_PLUS.csv};
    
    \end{axis}
    \end{tikzpicture} \qquad \qquad 
    \begin{tikzpicture}[scale=1]
    \begin{axis}[
        title={(f)},
        tick style={draw=none},
        xlabel={Number of Servers},
        ylabel={Latency (ms)},
        xlabel style={yshift=-5pt},
        ylabel style={yshift=8pt},
        ymax=20,
        cycle multiindex* list={
            color list,
            mark list*,
        },
        width=3in, 
        height=2in, 
        legend style={at={(.7,.35)}, anchor=north},
        title style={font=\normalfont\huge\bfseries},
        tick label style={font=\normalfont\LARGE},
        label style={font=\normalfont\huge},
    ]
    \addplot[color=blue,mark=o,mark size=3pt]
        table [x=num_servers,y=latency, col sep=comma]
         {plots/data/NumServers/EIGER.csv};
    
    \addplot[color=orange,mark=square,mark size=3pt]
        table [x=num_servers,y=latency, col sep=comma]
         {plots/data/NumServers/EIGER_PORT.csv};
    
    \addplot[color=purple,mark=triangle,mark size=3pt]
        table [x=num_servers,y=latency, col sep=comma]
         {plots/data/NumServers/EIGER_PORT_PLUS.csv};
    
    \end{axis}
    \end{tikzpicture}   
   }

\captionsetup{skip=5pt}
  \caption{
  Performance comparison among the Eiger-family protocols.
  }
  \label{fig:EP+ results}
  \vspace{-2ex}
\end{figure*}

\looseness=-1
\inlsec{Deployment}
We implement  \ouralg, along with Eiger and \ep, 
in the same codebase, 
each consisting of around 12 kLoC in Java.
We use \ep's 
 workload generator with default parameters of 32 threads per client, 1 million keys, 
 90\% read proportion, and the Zipfian key-access distribution with a skew factor of 0.8.
We deploy these three protocols on a CloudLab~\cite{Duplyakin+:ATC19} cluster of machines, each with 2.4 GHz Quad-Core Xeon CPU and 12 GB
RAM.
By default, we use eight 
servers to partition the database and
eight client machines to load the servers. 
We plot each data point using the average over five 60-second trials.

\smallskip
\inlsec{Evaluation}
Overall, \ouralg is highly performant and superior to both competing protocols (Figure~\ref{fig:EP+ results}a).
In particular, compared to the performance-optimal transaction protocol  \ep  providing TCC,  \ouralg exhibits  \emph{higher} throughput with a \emph{stronger} isolation guarantee including convergence.
\ouralg also scales well with an increasing number of clients (Figure~\ref{fig:EP+ results}b) and servers (Figure~\ref{fig:EP+ results}c), with 
up to 1.8x (resp. 2.5x) throughput improvement  
over \ep (resp. Eiger).
In addition, despite varying skews, \ouralg's throughput consistently surpasses that of its competitors  (Figure~\ref{fig:EP+ results}d).
This improvement becomes more pronounced with highly skewed workloads (or larger skew factors). 
This is because higher skewness results in increased concurrency, which would trigger additional server-side computation in \ep
and more rounds of communication in Eiger. 
Notably, despite its higher throughput and stronger isolation guarantee, \ouralg demonstrates similar latency to \ep (Figures~\ref{fig:EP+ results}e and~\ref{fig:EP+ results}f), which has been proven to be latency-optimal~\cite{SNOW:OSDI2016}.

\section{Related Work}

We compare our work with other efforts on verifying transaction protocols, distinguishing them based on testing, model checking, and deductive verification.

\smallskip
\inlsec{Testing} 
Previous work has devised various testers either for specific isolation levels, such as SI~\cite{polysi} and SER~\cite{cobra}, or for a range of levels~\cite{isovista,txcheck,elle,plume}. The underlying techniques are usually based on a characterization of anomalies, e.g., specified using dependency graphs~\cite{adya1999weak} or axioms~\cite{DBLP:journals/pacmpl/BiswasE19,DBLP:journals/jacm/CeroneG18}. 
In contrast to our work, testing can only verify individual protocol executions and therefore can easily miss rarely occurring isolation bugs. On the other hand, while we are verifying a protocol model, testing can be done on the actual implementation.

\smallskip
\inlsec{Model checking}
The Maude model checker has been used to verify the RAMP and LORA protocols for RA~\cite{DBLP:conf/sac/LiuORGGM16,10.1145/3494517}, the Walter protocol for (parallel) SI~\cite{DBLP:conf/wrla/LiuOWM18}, the ROLA protocol for Update Atomicity~\cite{DBLP:journals/fac/LiuOWGM19}, and the MegaStore and P-Store protocols for SER~\cite{DBLP:conf/birthday/GrovO14,DBLP:conf/wadt/Olveczky16}. Maude has also been used to verify various non-transactional consistency properties of the Cassandra key-value store~\cite{DBLP:conf/icfem/LiuRSGM14}.
Using the TLA+ model checker, the Azure CosmosDB has been verified against several non-transactional consistency properties~\cite{Azure} and TiDB against SI~\cite{TiDB}. This model checker has also been used to verify some properties of concurrency control protocols other than isolation guarantees~\cite{DBLP:conf/eurosys/KatsarakisMTBBD21,tapir}.
For model checking to be feasible, one must usually impose certain bounds (e.g., on the number of processes and transactions). In contrast, our work provides fully general verification results, which hold for arbitrary protocol executions.

\smallskip
\inlsec{Deductive verification}
Xiong et al.~\cite{DBLP:conf/ecoop/XiongCRG19} also combine reduction and refinement of their abstract transaction model to prove that the COPS protocol~\cite{COPS:SOSP2011} (with read-only transactions but single writes) 
satisfies TCCv and the Clock-SI protocol~\cite{DBLP:conf/srds/DuEZ13} satisfies SI. However, while being general, pen-and-paper proofs of such complex protocols are error-prone.
Previous work on mechanized deductive verification, to the best of our knowledge, only covers either non-transactional consistency properties or serializability of textbook protocols.
Chapar is a framework for verifying causal consistency of non-transactional KVSs.
PVS and Event-B have been applied to verify (S)SER of the two-phase locking (2PL) protocol~\cite{DBLP:conf/icfem/ChkliaevHS00,DBLP:conf/rodin/YadavB06}.
In contrast, we have designed a new protocol, \ouralg, which is substantially more complex than 2PL, and we have verified that it satisfies TCCv. 

\section{Conclusion}

\looseness=-1
We have designed \ouralg, a novel, causally-consistent, database transaction protocol, and formally verified its isolation guarantee of TCCv in Isabelle/HOL. In particular, TCCv was previously conjectured to be incompatible with transactional writes in the presence of performance-optimal read-only transactions.
We have formally refuted this conjecture by our protocol design and its verification. Moreover, this case study represents the first complete formal verification of a complex distributed database transaction protocol.
Our verification effort, excluding the verification framework, amounts to 10.3k lines of Isabelle/HOL code, composed of 0.7 kLoC for the model and 9.6 kLoC for the proof, which required 108 invariants. 
In addition, we have conducted a comprehensive evaluation, demonstrating  \ouralg's superior performance over two state-of-the-art protocols.
We believe our protocol is an attractive choice for database applications opting for TCCv. 

We see several avenues for future work.
First, to facilitate formal protocol modeling and correctness proofs, we will develop an abstract distributed protocol model as an intermediate refinement step. This model will capture structure common to protocols and factor out recurring parts of correctness proofs.
Second, we intend to support additional protocol features, for example, open-loop clients, which optimize transactional writes by immediately starting a new transaction, once they commit the previous one. An interesting case study in this context would be Eiger-NOC2~\cite{noc-noc}, a recent successor of \ouralg that improves its performance by using open-loop clients and other features. However, this protocol has not yet been formally verified to provide $\TCCv$. Third, we envision verifying our implementation and connecting it to our protocol verification results, possibly following the Igloo methodology~\cite{SprengerKEWMCB2020}.

\subsubsection*{Acknowledgements.}
We thank the anonymous reviewers for their valuable feedback. This research is supported by an ETH Zurich Career Seed Award and the Swiss National Science Foundation project 200021-231862 ``Formal Verification of Isolation Guarantees in Database Systems''.

%
%
%
\bibliographystyle{splncs04}
\bibliography{EPplus}

\iffull
\appendix
\section{The Pseudocode of \ouralg}
\label{app:epp}

\subsection{Logical Time}
Logical time refers to a way of keeping track of events in a distributed system where there is no global clock available. It is based on the concept of logical clocks which are used to assign a unique timestamp to each event, ensuring that the order of events is preserved in a consistent and meaningful way. We employ Lamport clocks~\cite{lamportclk}, where each process maintains a local clock 
whose value is increased whenever an event occurs. 
When a process sends a message, it includes its current clock reading as a timestamp, which is then used by the receiving process to update its clock to a value greater than both the received timestamp and its local clock. 
Lamport clocks reflect the causal ordering of events in the sense that the order of the timestamps associated to causally dependent events is consistent with their causal ordering. 
Overall, logical time and Lamport clocks are essential tools for managing distributed systems and ensuring that they operate correctly and efficiently.

\looseness=-1
In our pseudocode, given in the remainder of this section, clients and servers communicate using remote procedure calls (RPCs). These calls' invocation and reply messages use both sides' Lamport clocks as described above. However, to simplify the presentation, we do not explicitly show the use of these clocks on the client side. On the server side, we do show clock readings and updates, using an object \inlisa{clk\;:\;clock} with two methods: \inlisa{current()}, which reads the current clock value, and \inlisa{advance()}, which updates the clock, taking into account its current value and the client's clock value received with the current RPC invocation message.

\subsection{Client-side Procedures}

The pseudocode for the client-side data structures and its read and write procedures are given in \Cref{fig:epp-client}.

\begin{figure}[t]
    \centering
\begin{lstlisting}[backgroundcolor = \color{lightblue}, numbers=left]
--- version record
version = {
  cl_id : client_id,
  val : value,
  is_pend : bool,
  pend_t : timestamp,
  prep_t : timestamp,
  comt_t : timestamp
}

--- data structures:
gst : timestamp              --- global safe time
lst_map : key $\fun$ timestamp   --- maps key to its server's local safe time

--- procedures:
function read_only_txn(keys : key set) : key $\map$ value
  gst = min(range(lst_map))
  for k in keys       --- in parallel
    (vals[k], lst_map[k]) = read(k, gst, cl_id)
  return vals

function write_only_txn(vals : key $\map$ value) : unit
  --- Prepare
  for (k, v) in vals  --- in parallel
    ver[k] = prepare_write(k, v, cl_id)
  
  --- Commit
  commit_t = max {ver[k].prep_t | k in dom(vals)}
  for (k, v) in vals   --- in parallel
    lst_map[k] = commit_write(ver[k], commit_t)
  return ()
\end{lstlisting}
    \caption{Client-side procedures in \ouralg. 
    }
    \label{fig:epp-client}
\end{figure}

\inlsec{Data structures} We represent \ouralg's versions as records of type \inlisa{version}, which have six fields (lines~2-9): the version writer's client ID, a value, a flag \inlisa{is\_pending}, indicating whether the version is pending, and three timestamps, corresponding to its pending, preparing, and commit times. This data structure is also used on the server side. The client keeps track of the global safe time via a timestamp called \texttt{gst} (line~12). The versions with a commit timestamp smaller or equal to the global safe time 
are safe to read from every partition. Furthermore, the client also stores the latest safe time for every partition in the map \texttt{lst\_map}~(line~13). 

\inlsec{Read-only transactions} We first compute the current global safe time as the minimum of the local safe times (line~17). This ensures that it is safe to read at the global safe time from every server. Then, we call the server-side \texttt{read} procedure in parallel for all involved keys, sending to the partition in charge the key, the \texttt{gst}, and a client identifier (lines~18-19). This procedure returns the result value and the new local safe time, which we store in the maps \inlisa{vals} and \inlisa{lst\_map}. Finally, return the key-value map \inlisa{vals} (line~20).

\inlsec{Write transactions} First, we call \texttt{prepare\_write} in parallel for every key-value pair, sending to the respective partition the key, the value, and the (unique) client identifier (lines 24-25). This procedure returns a prepared version, which we store in the map \inlisa{ver}.
We then compute the commit timestamp \texttt{commit\_t} by selecting the largest prepared timestamp (line~28). Finally, we call the server-side procedure \texttt{commit\_write}  in parallel for every key and value pair, providing the respective prepared version and commit timestamp (lines~29-30). We receive as a response a new local safe time, which is used to update \texttt{lst\_map}.

\subsection{Server-side Procedures}

\Cref{fig:epp-server} displays the pseudocode for the server-side data structures and the read and write procedures.

\begin{figure}[t!]
\centering
\begin{lstlisting}[backgroundcolor = \color{lightblue}, numbers=left]
--- data structures:
clk : clock                   --- Lamport clock
lst : timestamp               --- local safe time, updated upon writes
pending_wtxns : timstamp set  --- pending write txns' timestamps
kvs : key $\fun$ version set      --- multi-versioned key-value store

--- read-only procedure:
function read(k : key, rts : timestamp, cl_id : clientID)
             : value $\times$ timestamp
  clk.advance()
  --- largest vers below rst (ordered by commit_t)
  ver = at(kvs[k], rts)
  --- ensure Read-Your-Writes
  for v in sort_decreasing_by_commit_t(kvs[k]) 
    if v.cl_id == cl_id & v.comt_t > rts then
      return (v.val, lst)
  return (ver.val, lst)

--- write-only procedures:
function prepare_write(k : key, v : value, cl_id : client_id) : version 
  pending_t = clk.current()
  clk.advance()
  pending_wtxns.insert(pending_t)
  ver = {val = v, cl = cl_id, is_pend = true, 
         pend_t = pending_t, prep_t = clk.current()}
  kvs[k].insert(ver)
  return ver

function commit_write(ver : version, commit_t : timestamp) : timestamp
  clk.advance()
  kvs[k].remove(ver)
  ver.comt_t = commit_t
  ver.is_pending = false
  pending_wtxns.remove(ver.pend_t)
  kvs[k].insert(ver)
  if pending_wtxns == $\emptyset$
  then lst = clk.current()
  else lst = min(pending_wtxns)
  return lst
\end{lstlisting}
    \caption{Server-side read-only and write-only transactions in \ouralg.}
    \label{fig:epp-server}

\end{figure}

\inlsec{Data structures} In \ouralg, the servers keep track of the local safe time \texttt{lst}, which, if there are any pending transactions in the partition, is the smallest of the pending timestamps, and otherwise is the current value of the Lamport clock. Moreover, we keep a set of timestamps, called \texttt{pending\_wtxns}, containing all pending timestamps of uncommitted writes. Finally, the server maintains a multi-versioned key-value store \texttt{kvs}.

\inlsec{Read} After advancing the Lamport clock (line~10), we first fetch the version \texttt{ver} at the read timestamp \texttt{rts} (line~11), which is the global safe time according to the client that issued the read. This returns the newest version with a commit timestamp smaller than or equal to \texttt{rts}. Then, we look for the newest committed version written by the same client with a commit timestamp greater than \inlisa{rts}, and return it, if it exists (lines 14-16). Otherwise, we return \texttt{ver} (line 17). In either case, we also return the local safe time.

\inlsec{Prepare write} The server uses the current value of the Lamport clock as a timestamp \texttt{pending\_t} for the pending transaction, it adds said value to \texttt{pending\_wtxns} and advances the clock (lines 21-23). Then it adds a new version of the key to the database (lines 24-26). This version contains the value, the client id, the \texttt{is\_pending} flag set to true, the pending timestamp set to \texttt{pending\_t}, and the prepared timestamp set to the new clock value. Finally, the procedure returns the new version.

\inlsec{Commit write} We first advance the Lamport clock and remove the received (pending) version \inlisa{ver} from the \inlisa{kvs} (lines 30-31). Then we update the version by setting its commit timestamp to \texttt{commit\_t} and its \texttt{is\_pending} flag to false (lines 32-33). Next, we remove the version's pending timestamp from the set of pending transactions and add the updated version to the \inlisa{kvs} (lines 34-35). Finally, we update \texttt{lst} to the smallest pending timestamp, if there are any, or otherwise to the updated Lamport clock value. The procedure then returns the new local safe time (lines 36-39).
\fi

\end{document}